\setlist{nosep}
\newcommand{\E}{{\mathbb E}}
\newcommand{\Cov}{{\mathbb C}{\rm ov}}
\newcommand{\RR}{{\mathbb R}}
\newcommand{\n}{\|}
\newcommand{\bi}{\begin{itemize}}
\newcommand{\ei}{\end{itemize}}
\newcommand{\be}{\begin{enumerate}}
\newcommand{\ee}{\end{enumerate}}
\newcommand{\ra}{\rightarrow}
\newcommand{\iy}{\infty}
\newcommand{\beq}{\begin{equation}}
\newcommand{\eeq}{\end{equation}}
\newcommand{\beqa}{\begin{eqnarray*}}
\newcommand{\eeqa}{\end{eqnarray*}}
\newcommand{\btm}{\begin{theorem}}
\newcommand{\etm}{\end{theorem}}
\newcommand{\bpf}{\begin{proof}}
\newcommand{\epf}{\end{proof}}
\newcommand{\bla}{\begin{lemma}}
\newcommand{\ela}{\end{lemma}}
\newcommand{\bdn}{\begin{definition}}
\newcommand{\edn}{\end{definition}}
\newcommand{\bpn}{\begin{proposition}}
\newcommand{\epn}{\end{proposition}}
\newcommand{\bcy}{\begin{corollary}}
\newcommand{\ecy}{\end{corollary}}
\DeclareMathOperator{\tr}{tr}
\DeclareMathOperator*{\foo}{\scalerel*{+}{\sum}}
\newtheorem{proposition}{Proposition}
\newtheorem{definition}{Definition}
\newtheorem{lemma}[proposition]{Lemma}
\newtheorem{corollary}[proposition]{Corollary}
\begin{document}

%

%

\twocolumn[

\aistatstitle{Leave-One-Out Cross-Validation for Bayesian Model Comparison in Large Data}

\aistatsauthor{ M\r{a}ns Magnusson \And Michael Riis Andersen \And  Johan Jonasson \And  Aki Vehtari }


\aistatsaddress{ Aalto University \And Technical University\\ of Denmark \And Chalmers University\\ of Technology \And Aalto University} ]

\begin{abstract}
Recently, new methods for model assessment, based on subsampling and posterior approximations, have been proposed for scaling leave-one-out cross-validation (LOO) to large datasets. Although these methods work well for estimating predictive performance for individual models, they are less powerful in model comparison. We propose an efficient method for estimating differences in predictive performance by combining fast approximate LOO surrogates with exact LOO subsampling using the difference estimator and supply proofs with regards to scaling characteristics. The resulting approach can be orders of magnitude more efficient than previous approaches, as well as being better suited to model comparison.
\end{abstract}

\section{INTRODUCTION}
Model comparison is an important part of probabilistic machine learning. In many real-world domains, we are often confronted with multiple models and would like to choose the model that best generalizes to new, unseen data. This can be done in a large number of ways, but here we will restrict ourselves to choosing between models based on the \emph{predictive} performance. Due to the growing data sizes over the last years, scaling model comparison methods to large data is an important problem.

One measure of predictive performance is the \emph{expected log predictive density} (elpd) given by
\begin{align}
\label{elpd}
\overline{\text{elpd}}_M= &\int \log p_M(\tilde{y}_i|y) p_t(\tilde{y}_i) d\tilde{y}_i \\ = & \int \log \left[ \int p_M(\tilde{y}_i|\theta) p_M(\theta|y) d\theta \right] p_t(\tilde{y}_i) d\tilde{y}_i \nonumber ,
\end{align}
where $\log p_M(\tilde{y}_i|y)$ is the log predictive density of model $M$ for a new observation $\tilde{y}_i$, that has been generated by some true, unknown process, $p_t(\tilde{y}_i)$. The log predictive density, or the log score, has good theoretical properties in that it is both \emph{local}, i.e., only depend on $\tilde{y}_i$, and \emph{proper}, the  expected reward is maximized by the true probability distribution \citep{bernardo1979expected,bernardo1994bayesian,gneiting2007probabilistic,vehtari2012survey}. Although we focus on the log score in this paper, other scoring functions can be used.


\subsection{Leave-one-out cross-validation}

Leave-one-out cross-validation (LOO) is a method for estimating the elpd, or the generalization performance, of a model \citep{bernardo1994bayesian,vehtari2012survey,vehtari2017practical}. This is done by training the model on all observations except observation $y_i$, and then predicting the hold-out observation $y_i$, something that is then repeated for all $n$ observations. In this way we treat each observation $y_i$ as a pseudo-Monte-Carlo sample from the true generating model $p_t$. We hence compute $n$ leave-one-out (LOO) posterior distributions $p(\theta|y_{-i})$, where $y_{-i}$ denotes the data with observation $y_i$ removed. Using the LOO posteriors, we can estimate the elpd in Eq.~\eqref{elpd} as 
\begin{align}
\label{elpd_loo}  
\overline{\text{elpd}}_\text{loo} = & \frac{1}{n} \sum^n_{i=1} \log p_M(y_i|y_{-i}) \\
= &\frac{1}{n} \sum^n_{i=1} \log \int p_M(y_i|\theta) p_M(\theta | y_{-i}) d\theta   \nonumber \\
= & \frac{1}{n} \, \text{elpd}_\text{loo} \nonumber\,, 
\end{align}
where $p_M(y_i | \theta)$ is the likelihood, and $p_M(\theta | y_{-i})$ is the posterior for $\theta$ where we hold out observation $y_i$. 

Although the many good properties of LOO, scaling the approach to large data is a problem. The naive approach to LOO means that $n$ posteriors need to be computed. In situations with large $n$, the cost of just computing one posterior may be large, hence leading to poor scaling.

\subsection{Approximating LOO}
A number of approximate techniques have recently been proposed to approximate exact LOO. \citet{wang2018approximate} and \citet{giordano2019swiss} propose LOO-approximations with very appealing error bounds for M-estimators. The main idea is to fit a model on the complete data set and then extrapolate to capture the effect of holding out individual observations using a second-order Taylor approximation. For some special classes of models, such as Gaussian processes, specialized LOO approximations have been proposed \citep{held2010posterior,vehtari2016bayesian}.

In the Bayesian domain, similar ideas was introduced by \citet{gelfand1996model} using self-normalized importance sampling (IS). The idea is to use the full posterior distribution as the proposal distribution in an importance sampling scheme with the LOO posterior as the target distribution. In this way, we only need to estimate the model once. Given $S$ draws from the full posterior $p(\theta|y)$, we can estimate the individual elpd contributions as 
\begin{align}
\label{eq:is_estimate}
\log \hat{p}(y_i|y_{-i}) = \log\left( \frac{\frac{1}{S} \sum_{s=1}^S p_M(y_i|\theta_s) r(\theta_s)}{\frac{1}{S} \sum_{s=1}^S r(\theta_s)} \right)\,,\\ \quad\quad r(\theta_s) = 
\frac{p_M(\theta_s|y_{-i})}{p_M(\theta_s|y)} 
\propto \frac{1}{p_M(y_i|\theta_s)} \,,
\end{align}
and where the last step is the result for factorizable likelihoods. In case of highly influential observations, the proposal distribution, i.e., the full posterior distribution can be very different than the target LOO posterior, and importance sampling estimates may have large variance. \citet{vehtari2015pareto} present Pareto-smoothed importance sampling (PSIS) to smooth the importance ratios $r(\theta_s)$, introducing a small bias, but reducing the overall mean-squared error. The PSIS approach also has the benefit that the estimated shape parameter $k$ of the generalized Pareto distribution can diagnose when the importance sampling approach has too large (or infinite) variance \citep{vehtari2017practical}. 

LOO is closely related to the Watanabe-Akaike or widely applicable information criterion \citep[WAIC, ][]{watanabe2010asymptotic}. The elpd of a given model can be estimated using WAIC as 
\begin{align}
\label{waic}
\text{elpd}_\text{WAIC} = &\sum_{i=1}^n \log p(y_i|y) - V_\theta(\log p(y_i|\theta)) \nonumber \\ = & \sum_{i=1}^n \log p(y_i|y) - p_{i,\text{eff}}\,,
\end{align}
where $V_\theta(\log p(y_i|\theta))$ is the variance of the log likelihood over the (full) posterior $p(\theta|y)$, often called the \emph{effective number of parameters} or $p_\text{eff}$. It has been shown that WAIC and LOO are asymptotically equivalent \citep{watanabe2010asymptotic}, but LOO has been found to be more robust than WAIC in the finite data domain, especially in the case of outliers or weak priors. This is because the WAIC approximation ignores higher order terms and these may be non-negligible for finite data \citep{gelman2014understanding,vehtari2016bayesian,vehtari2017practical}. Importantly, both LOO and WAIC are consistent estimators of the true $\text{elpd}$ under mild assumptions \citep{watanabe2010asymptotic}.

\subsection{LOO for Large Data}

In \citet{magnusson2019bayesian}, two problems with Bayesian PSIS-LOO for large data are addressed. First the results of \citet{gelfand1996model} in Eq.~\eqref{eq:is_estimate} are extended to approximate inference methods such as variational Bayes (VB) and Laplace approximations, and second an efficient subsampling method using the Hansen-Hurwitz \citep[HH, ][]{hansen1943} estimator is proposed. \citet{magnusson2019bayesian} use the full log predictive density $\log p(y_i|y)$ (lpd) and the $\log p(y_i|\hat{\theta})$, the point log predictive density (plpd) as an auxiliary variable, $\tilde{\pi}$. The data is then subsampled proportionally $\tilde{\pi}$ to efficiently estimate the elpd as
\begin{align}
\label{eq:hh_estimate}
\widehat{\text{elpd}}_\text{HH} = \frac{1}{m} \sum_{j\in\mathcal{S}}  \frac{1}{\tilde{\pi}_{j}} \log \hat{p}(y_j|y_{-j})\,,
\end{align}
where $m$ is the subsample size and $\mathcal{S}$ is the subsample.
This approach works well for estimating the elpd of individual models and has good theoretical properties, but it has two problems when used for model comparison.

First, when comparing models we are often interested in the elpd for a set of different models. Since the auxiliary information is used in the subsampling step, this means that we would need to draw a new subsample for each estimate of interest, such as (1) the elpd of each model, (2) the elpd difference between models, and (3) the variance of each elpd estimate. Ideally, we would like to just draw one subsample and then based on that subsample compute all estimates of interest. 

Second, using the $\log p(y_i|\hat{\theta})$ as the auxiliary variable misses the effect of the efficient number of parameters in the model $p_\text{eff}$, i.e., the model complexity, as can be seen in Eq.~\eqref{waic}. This means that we would need larger subsample sizes when estimating more complex models. 

\subsection{Contributions and Limitations}

In this paper, we focus on methods for scaling Bayesian LOO methods for comparing models for large data. We show that using the difference estimator combined with simple random sampling without replacement is very well suited for model comparison purposes. Since model auxiliary information is not used in the sampling stage, but in the estimation, the approach is much better suited for the situation of model comparison. 

We also show that incorporating estimates of $p_\text{eff}$ improves the performance of the subsampling and propose fast methods to approximate $p_\text{eff}$ for large data and propose computationally efficient approximations, $\tilde{\pi}$, that take $p_\text{eff}$ into account.

We prove that the difference estimator will converge in mean to the true LOO ($\overline{\text{elpd}}_\text{loo}$) for any LOO approximation $\tilde{\pi}$ that converge in mean to $\pi$, irrespective of subsample size and the number of draws from the posterior. We also prove that our proposed approximations will converge in mean to $\pi$.

Together this makes the approach well suited for generic large-data model inference, such as in probabilistic programming frameworks as Stan \citep{carpenter2017stan}. 

The limitations with the proposed approach are the same as using general PSIS-LOO \citep[see ][ for a detailed discussion]{vehtari2017practical}, such as that the likelihood needs to be factorizable for Eq.~\eqref{eq:is_estimate} to hold.

\section{LARGE DATA MODEL COMPARISON USING LOO}

Let $\text{elpd}_\text{A}$ and $\text{elpd}_\text{B}$ be the $\text{elpd}_\text{loo}$ for model A and model B, respectively. To compare models, we are interested in the difference in elpd between models, $ \text{elpd}_\text{D} = \text{elpd}_\text{A} - \text{elpd}_\text{B}$ as well as $V(\text{elpd}_\text{D})$, the variability due to the data, where
\[
V(\text{elpd}_\text{D})=V(\text{elpd}_\text{A}) + V(\text{elpd}_\text{B}) - 2\, \text{Cov}(\text{elpd}_\text{A}, \text{elpd}_\text{B}) \,.
\]

To efficiently estimate $V(\text{elpd}_\text{D})$, we propose to use the \emph{difference estimator} and simple random sampling without replacement (SRS). We also propose to include $p_\text{eff}$ in Eq.~\eqref{waic} for better approximations of $\log p(y_i|y_{-i})$. This makes it possible to better compare models by computing the full posterior distributions \emph{once} and then compare models performance on \emph{one} subsample of observations.

\subsection{The Difference Estimator}

Let $\pi_i = \log p(y_i|y_{-i})$ be our variable of interest where $\text{elpd}_\text{loo}=\sum^n_i \pi_i$. Then let $\tilde{\pi}_i$ be any approximation of $\log p(y_i|y_{-i})$. Given $\tilde{\pi}_i$ we can use the difference estimator, a special case of the regression estimator \citep[Ch. 7]{cochran77}, together with SRS. The $\text{elpd}_\text{loo}$ can then be estimated as
\begin{align}
\label{diff_est}
\widehat{\text{elpd}}_\text{diff,loo} = \sum^n_{i=1} \tilde{\pi}_i + \frac{n}{m} \sum_{j\in\mathcal{S}} \left(\pi_j - \tilde{\pi}_j\right)\,,
\end{align}
where $m$ is the subsampling size and $\mathcal{S}$ is the subsample. The (subsample) variance associated with the difference estimator is 
\begin{align}
\label{diff_est_var}
V(\widehat{\text{elpd}}_\text{diff,loo}) = n^2 \left(1-\frac{m}{n}\right) \frac{s_e^2}{m} \,,
\end{align}
where $s_e^2$ is the sample standard deviations of the approximation error $e_j = \pi_j - \tilde{\pi}_j$, i.e $s^2_e=\frac{1}{m-1}\sum_j^m (e_j - \bar{e})^2$ and $\bar{e}=\frac{1}{m} \sum_j^m e_j$. 

The proposed approach has two important properties. First, as the sequence of numbers $\tilde{\pi}_i \rightarrow \pi_i$, $V(\widehat{\text{elpd}}_\text{diff,loo}) \rightarrow 0$. Unlike the HH  estimator in Eq.~\eqref{eq:hh_estimate}, we also have the property that as $\frac{m}{n}\rightarrow 1$, $V(\widehat{\text{elpd}}_\text{diff,loo}) \rightarrow 0$. This finite correction factor increases the efficiency also in smaller data, where LOO still can be costly. 

Second, the main benefits of using the difference estimator is that we can use a sampling scheme that do not depend on the models. Instead, we use the \emph{same} subsample to estimate all properties of interest, such as $\text{elpd}_\text{loo}$ for all models. This reduce the computational cost, especially for model comparisons, since we can reuse the already computed values for the sample when computing the $\text{elpd}_D$. Similarly, for model comparison, we are also interested in estimating $V(\text{elpd}_\text{loo})=\sigma^2_\text{loo}$, the variability of the $\text{elpd}_\text{loo}$ and $\text{elpd}_\text{D}$, for comparing models. Using the difference estimator we estimate $\sigma^2_\text{loo}$ as 
\begin{align}
\label{sigma_loo_estimator}
\hat{\sigma}^2_\text{diff,loo} = & \sum_{i=1}^n \tilde{\pi}_i^2 + \frac{n}{m} \sum_{j\in\mathcal{S}} \left(\pi^2_j - \tilde{\pi}^2_j\right) +  \\ 
& \frac{1}{n} \left[\left(\frac{n}{m} \sum_{j\in\mathcal{S}} \left(\pi_j - \tilde{\pi}_j\right)\right)^2 - V(\widehat{\text{elpd}}_\text{diff,loo})\right] + \nonumber \\ & \frac{1}{n} \left[ 2 \left(\sum_{i=1}^n \tilde{\pi}_i\right) \widehat{\text{elpd}}_\text{diff,loo} - \left(\sum_{i=1}^n \tilde{\pi}_i\right)^2 \right] \,. \nonumber
\end{align}
Eq.~\eqref{sigma_loo_estimator} shows that using the difference estimator, we only need to compute $\tilde{\pi}^2_i$ to estimate $\sigma^2_\text{loo}$, using the same subsample. The difference estimator is hence better suited for the case of large data Bayesian model comparison. We conclude by noting that the difference estimator is unbiased.


\begin{proposition} \label{pa0}
The estimators $\widehat{\text{elpd}}_\text{diff,loo}$ and $\hat{\sigma}^2_\text{diff,loo}$ are unbiased with regard to $\text{elpd}_\text{loo}$ and $\sigma^2_\text{loo}$.
\end{proposition}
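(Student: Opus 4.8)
The plan is to treat the exact pointwise values $\pi_i=\log p(y_i|y_{-i})$ and their approximations $\tilde\pi_i$ as fixed given the data, so that the only source of randomness is the simple random sample $\mathcal{S}$ of size $m$ drawn without replacement from $\{1,\dots,n\}$. The single fact I would lean on throughout is the inclusion probability of SRS without replacement: each index satisfies $\Pro(i\in\mathcal{S})=m/n$, so for any fixed vector $(x_1,\dots,x_n)$ one has $\E\!\left[\frac{n}{m}\sum_{j\in\mathcal{S}}x_j\right]=\sum_{i=1}^n\frac{n}{m}\Pro(i\in\mathcal{S})\,x_i=\sum_{i=1}^n x_i$. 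In words, the expansion (Horvitz--Thompson) estimator of a finite-population total is exactly unbiased, and both estimators in the proposition are built entirely from such totals.

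For the first estimator I would apply this with $x_i=e_i=\pi_i-\tilde\pi_i$. Since $\sum_i\tilde\pi_i$ is a constant, $\E[\widehat{\text{elpd}}_\text{diff,loo}]=\sum_i\tilde\pi_i+\E\!\left[\frac{n}{m}\sum_{j\in\mathcal{S}}e_j\right]=\sum_i\tilde\pi_i+\sum_i(\pi_i-\tilde\pi_i)=\sum_i\pi_i=\text{elpd}_\text{loo}$, which settles the claim for $\widehat{\text{elpd}}_\text{diff,loo}$.

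For the variance estimator I would first expose its structure. Writing $\hat T_1=\widehat{\text{elpd}}_\text{diff,loo}=A+D$ with constant $A=\sum_i\tilde\pi_i$ and random total $D=\frac{n}{m}\sum_{j\in\mathcal{S}}(\pi_j-\tilde\pi_j)$, I would verify the algebraic identity $2A\hat T_1-A^2=A^2+2AD$, so that $D^2+(2A\hat T_1-A^2)=(A+D)^2=\hat T_1^2$; hence the final two bracketed groups of $\hat\sigma^2_\text{diff,loo}$ collapse to $\tfrac{1}{n}\big(\hat T_1^2-V(\widehat{\text{elpd}}_\text{diff,loo})\big)$. The first two terms are simply the difference estimator applied to $x_i=\pi_i^2$ with auxiliary $\tilde\pi_i^2$, so by the argument above their expectation is $\sum_i\pi_i^2$. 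Thus $\hat\sigma^2_\text{diff,loo}$ is an unbiased estimator of $\sum_i\pi_i^2$ assembled with a bias-corrected estimator of $\tfrac{1}{n}\big(\sum_i\pi_i\big)^2$, which together reproduce the definition of $\sigma^2_\text{loo}$.

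The crux, and the only genuinely nonlinear step, is this quadratic term. The naive plug-in $\hat T_1^2$ is \emph{not} unbiased, because by the first part $\E[\hat T_1^2]=V(\widehat{\text{elpd}}_\text{diff,loo})+(\E\hat T_1)^2=V(\widehat{\text{elpd}}_\text{diff,loo})+\text{elpd}_\text{loo}^2$. Subtracting the variance term that appears inside the estimator cancels this bias exactly, giving $\E\big[\tfrac{1}{n}(\hat T_1^2-V(\widehat{\text{elpd}}_\text{diff,loo}))\big]=\tfrac{1}{n}\text{elpd}_\text{loo}^2$, after which collecting terms with $\E[\hat T_2]=\sum_i\pi_i^2$ finishes the argument. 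The main obstacle is not an inequality but bookkeeping around this cancellation: one must be careful whether the symbol $V(\widehat{\text{elpd}}_\text{diff,loo})$ denotes the true subsample variance or its sample estimate from \eqref{diff_est_var}; in the latter case unbiasedness survives precisely because that plug-in is itself unbiased for the true variance, so its expectation still annihilates the bias term.
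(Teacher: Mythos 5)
Your handling of $\widehat{\text{elpd}}_\text{diff,loo}$ and of every component expectation is the same as the paper's: the paper also reduces everything to unbiasedness of the expansion estimator of a finite-population total (it uses draw indicators $I_{ij}$ with $\E[I_{ij}]=1/n$ where you use inclusion probabilities $m/n$, an immaterial difference for this step), it estimates $\sum_i \pi_i^2$ by the difference estimator with auxiliary $\tilde{\pi}_i^2$, and it treats the quadratic term exactly as you do, via $\E[\hat{T}_1^2]=V(\hat{T}_1)+(\E \hat{T}_1)^2$ combined with the fact that the plug-in variance in Eq.~\eqref{diff_est_var} is itself unbiased for the true subsampling variance, so that the subtraction cancels the bias. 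Indeed, your collapse of the last two brackets of Eq.~\eqref{sigma_loo_estimator} into $\tfrac{1}{n}\bigl(\hat{T}_1^2 - V(\widehat{\text{elpd}}_\text{diff,loo})\bigr)$ is precisely the quantity the paper calls $\hat{b}$ (up to an overall factor of $n$), and your remark about needing $\E$ of the plug-in variance to equal the true variance is exactly the paper's step $E(v(\hat{t}_e))=V(\hat{t}_e)$.

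The gap is the sign with which that quadratic piece is assembled, i.e.\ the very last sentence of your argument. Any variance-type definition of $\sigma^2_\text{loo}$ has the form $\sum_i \pi_i^2 - \tfrac{1}{n}\bigl(\sum_i \pi_i\bigr)^2$ (equivalently $\sum_i(\pi_i-\bar{\pi})^2$, possibly rescaled by $1/n$ as in the paper's supplement): the squared-total term enters with a \emph{minus} sign. Following the plus signs of Eq.~\eqref{sigma_loo_estimator} literally, your argument yields $\E[\hat{\sigma}^2_\text{diff,loo}]=\sum_i \pi_i^2 + \tfrac{1}{n}\,\text{elpd}_\text{loo}^2$, and you then assert this ``reproduces the definition of $\sigma^2_\text{loo}$'' without ever writing that definition down; it does not, and no variance can have that form. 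The paper's own proof assembles the two pieces as $\hat{a}-\hat{b}$, where $\hat{a}$ estimates $\tfrac{1}{n}\sum_i\pi_i^2$ and $\hat{b}=\tfrac{1}{n^2}\bigl(\hat{T}_1^2 - v\bigr)$ estimates $\bar{\pi}^2$; only with this subtraction does the algebra $t_e^2 + 2t_{\tilde{\pi}}t_{\pi} - t_{\tilde{\pi}}^2 = t_{\pi}^2$ deliver $\sigma^2_\text{loo}$. (The display in Eq.~\eqref{sigma_loo_estimator} evidently carries sign typos relative to the supplementary proof, which is the discrepancy you should have caught.) So as written you have proved unbiasedness for the wrong sign-combination; flipping the sign of the two $\tfrac{1}{n}$-brackets — which is what the paper actually proves — repairs the argument, and all of your component calculations then go through verbatim.
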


\begin{proof}
See the supplementary material.
\end{proof}

\emph{Remark} Note that $\hat{\sigma}^2_\text{diff,loo}$ is most often an optimistic estimate for the variability of $\text{elpd}_\text{loo}$, since no general unbiased estimator of the true variability exists \citep{bengio2004no}.

\subsection{Fast Approximate LOO Surrogates}
\label{sec:p_eff}

For the difference estimator to have small variance, we need good approximations of the variable of interest. We start with the following definition.

\begin{definition}
\label{consistent_approximation}
An approximation $\tilde{\pi}_i$ of $\pi_i$ is said to converge in mean if $\E|\pi_i - \tilde{\pi}_i| \ra 0$ as $n \ra \iy$.
\end{definition}

When estimating $\text{elpd}_\text{loo}$ we want the approximation $\tilde{\pi}_i$ to have the following three properties: 
\begin{enumerate}
    \item a good finite data approximation of $\pi_i$,
    \item computationally cheap, and
    \item converges in mean to $\pi_i$.
\end{enumerate}

The last property is needed for Proposition \ref{pa2} and \ref{pa3}, that shows favorable theoretical scaling characteristics of the estimator as $n \rightarrow \infty$.

The WAIC estimator in Eq.~\eqref{waic} indicates that using the plpd as $\tilde{\pi}_i$, such as in \citet{magnusson2019bayesian}, will essentially miss the effect of the effective number of parameters $p_\text{eff}=V_\theta(\log p(y_i|\theta))$ in approximating $\pi_i$. Since it has been shown by \citet{watanabe2010asymptotic} that the WAIC and LOO are asymptotically equivalent, including $p_\text{eff}$ will improve over the plpd, especially for more complex models. Using the WAIC as approximation we set $\tilde{\pi}_i = \log p(y_i|y) - V_\theta(\log p(y_i|\theta))$, where $\text{elpd}_\text{WAIC} = \sum_i^n \tilde{\pi}_i$ in Eq.~\eqref{waic}.

A problem with this approximation is that for each observation we need to integrate over the posterior to compute $\tilde{\pi}_i$ based on the WAIC in Eq.~\eqref{waic} and hence the approximation is more costly than using the plpd. To reduce the cost of computing $\tilde{\pi}$, the simplest way is to reduce the number of draws to approximate $\tilde{\pi}$ to $S_{\tilde{\pi}}$ where $S_{\tilde{\pi}} < S$, but this also reduces the accuracy. 

Another approach to approximate $\pi_i$ more computationally efficient is to approximate $p_{i,\text{eff}}$ in Eq.~\eqref{waic} directly using a Taylor approximation:
\begin{align}
\label{delta}
 \tilde{p}_{i,\text{eff}} = & \nabla \log(p(y_i|\theta))^T \Sigma_\theta \nabla \log(p(y_i|\theta)) \nonumber \\ & + \frac{1}{2} \tr(H_{i,\theta} \Sigma_\theta H_{i,\theta} \Sigma_\theta)\,,
\end{align}
where $\nabla \log(p(y_i|\theta))$ and $H_{i,\theta}$ are the gradient and Hessian of $\log(p(y_i|\theta))$ with respect to $\theta$, respectively. This gives us an approximation of the $p_{i,\text{eff}}$ without the need to compute $V_\theta(\log p(y_i|\theta))$ over all $S$ draws. We can use the idea to produce three different approximations, $\Delta_2\text{WAIC}$ that uses Eq.~\eqref{delta}, $\Delta_1\text{WAIC}$ that only use the first order (gradient) term and $\Delta_1\text{WAIC}_\text{m}$ that only uses the first order term and the marginal variances, i.e., using $\text{diag}(\Sigma_\theta)$ instead of $\Sigma_\theta$ in Eq.~\eqref{delta}.


Another approach to approximate PSIS-LOO is to use truncated importance sampling \citep[TIS, ][]{ionides2008truncated}. TIS-LOO will increase the bias but is less computationally costly since we remove the cost of estimating Pareto-$k$ and smoothing using the Pareto distribution. As has been shown in \citet{vehtari2015pareto}, TIS-LOO can approximate LOO better than WAIC at the same computational cost. As with WAIC, we can also use TIS-LOO with fewer posterior draws to compute computationally less costly approximations of $\tilde{\pi}$. 

\subsection{Summary of Approach}

The difference estimator and fast LOO surrogates lead us to how we can compare models for large data. 
\begin{enumerate}
    \item Compute the posterior $p_A(\theta|y)$ and $p_B(\theta|y)$ for model A and B, respectively.
    \item Compute $\tilde{\pi}$ for model A and B using an approximation that fulfill the properties in Sec. \ref{sec:p_eff}.
    \item Compute the approximate differences as $\tilde{\pi}_{i, D}=\tilde{\pi}_{i,A} - \tilde{\pi}_{i,B}$ for all $n$.
    \item Draw a subsample of size $m$ and compute $\pi_{j,D}=\pi_{j,A} - \pi_{j,B}$ for all $m$. 
    \item Estimate $\text{elpd}_\text{D}$ and $V(\text{elpd}_\text{D})$ using Eq.~\eqref{diff_est} and (\ref{diff_est_var}).
\end{enumerate}
Depending on the accuracy of the chosen approximation $\tilde{\pi}$, we can easily increase the subsampling size $m$ to reach the desired accuracy. 

\subsection{Asymptotic Properties}

Here we study the asymptotic properties of using the difference estimator together with any approximation $\tilde{\pi}_i$ that converges in mean to $\pi$. Let $(y_1,y_2,\ldots,y_n)$, $y_i \in \mathcal{Y} \subseteq \RR$ be drawn from a  true density $p_t = p(\cdot|\theta_0)$ with the true parameter $\theta_0$ that is assumed to be drawn from $p(\theta)$ on the parameter space $\Theta$, an open and bounded subset of $\RR^d$. To prove Proposition \ref{pa2} and \ref{pa3} we make the following assumptions:
\begin{itemize}
  \item[(i)] the likelihood $p(y|\theta)$ satisfies that there is a function $C:\mathcal{Y} \ra \RR_+$, such that $\E_{y \sim p_t}[C(y)^2] < \iy$ and such that for all $\theta_1$ and $\theta_2$, $|p(y|\theta_1)-p(y|\theta_2)| \leq C(y)p(y|\theta_2)\n\theta_1-\theta_2\n$.
   \item[(ii)] $p(y|\theta)>0$ for all $(y,\theta) \in \mathcal{Y} \times \Theta$,
   \item[(iii)] There is a constant $M<\iy$ such that $p(y|\theta) < M$ for all $(y,\theta)$,
  \item[(iv)] for all $\theta$, $\int_{\mathcal{Y}}(-\log p(y|\theta))p(y|\theta)dy < \iy$.  
  \item[(v)] all assumptions needed in the Bernstein-von Mises Theorem \citep{walker1969asymptotic}, and
  \item[(vi)] $p(\theta|y)>0$ or all $\theta \in \Theta$
\end{itemize}

Here we also generalize the definition of $r(\theta_s)$ in Eq.~\eqref{eq:is_estimate} to handle arbitrary posterior approximations \citep[see][ for an extended discussion]{magnusson2019bayesian}. Hence, let 
\begin{align*}
    r(\theta_s) \propto & \frac{1}{p(y_i|\theta_s)} \frac{p(\theta_s|y)}{q(\theta_s|y)}\,.
\end{align*}

Now, let $\widehat{\overline{\text{elpd}}}_{\rm diff,loo}=\frac{1}{n}\widehat{\text{elpd}}_{\rm diff,loo}$, we then have the following propositions.

\begin{proposition} \label{pa2}
  For any approximation $\tilde{\pi}_i$ that converges in mean to $\pi_i$, we have that $\widehat{\overline{\text{elpd}}}_{\rm diff,loo}$ converges in mean to $\overline{\text{elpd}}_{\rm loo}$.
\end{proposition}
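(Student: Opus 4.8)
The plan is to show directly that $\E\,\bigl|\widehat{\overline{\text{elpd}}}_{\rm diff,loo}-\overline{\text{elpd}}_{\rm loo}\bigr|\ra 0$, where the expectation runs over both the data $(y_1,\dots,y_n)$ and the random subsample $\mathcal{S}$. First I would rewrite the estimation error $D:=\widehat{\overline{\text{elpd}}}_{\rm diff,loo}-\overline{\text{elpd}}_{\rm loo}$ in a convenient form. Writing $e_i=\pi_i-\tilde{\pi}_i$ and dividing the difference estimator \eqref{diff_est} by $n$, a direct cancellation gives
\begin{align*}
D = \frac{1}{m}\sum_{j\in\mathcal{S}} e_j - \frac{1}{n}\sum_{i=1}^n e_i\,,
\end{align*}
so the whole problem reduces to controlling the $L^1$ distance between the subsample mean and the population mean of the errors $e_i$.

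The key point --- and the reason I would not route through the variance formula \eqref{diff_est_var} --- is that Definition \ref{consistent_approximation} supplies only a first-moment ($L^1$) bound on the errors; a bound of the form $\E\,|D|\le\sqrt{\Var(D)}$ would instead require control of $\E\,e_i^2$, which is not assumed. So I would keep absolute values throughout and apply the triangle inequality,
\begin{align*}
\E\,|D| \le \E\left|\frac{1}{m}\sum_{j\in\mathcal{S}} e_j\right| + \E\left|\frac{1}{n}\sum_{i=1}^n e_i\right|\,,
\end{align*}
and bound the two terms separately.

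For the population term, linearity and the triangle inequality give $\E\,\bigl|\frac{1}{n}\sum_i e_i\bigr|\le\frac{1}{n}\sum_i \E\,|e_i|$. Since the data are i.i.d.\ given $\theta_0$, the pair $(\pi_i,\tilde{\pi}_i)$ is produced by the same symmetric rule for every $i$, so the $e_i$ are identically distributed and $\E\,|e_i|=\delta_n$ for a common $\delta_n\ra 0$ by Definition \ref{consistent_approximation}; this term is therefore bounded by $\delta_n$. For the subsample term I would condition on the data and use the tower property together with the SRS-without-replacement inclusion probability $\Pro(i\in\mathcal{S})=m/n$:
\begin{align*}
\E\left|\frac{1}{m}\sum_{j\in\mathcal{S}} e_j\right| \le \frac{1}{m}\,\E\sum_{j\in\mathcal{S}} |e_j| = \frac{1}{m}\cdot\frac{m}{n}\sum_{i=1}^n \E\,|e_i| = \delta_n\,.
\end{align*}
Adding the two bounds yields $\E\,|D|\le 2\delta_n\ra 0$, which is the claim.

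Two features of this bound double as sanity checks against the surrounding text. The prefactor relating $\E\,|D|$ to the assumed approximation rate $\delta_n$ is a constant independent of the subsample size $m$, and the argument never requires $m\ra\infty$ or the number of posterior draws $S$ (which enters only through $\tilde{\pi}_i$, hence through $\delta_n$) to grow; this matches the assertion that convergence holds ``irrespective of subsample size and the number of draws from the posterior.'' The only real obstacle I anticipate is the careful bookkeeping of the two sources of randomness: making sure the tower property is applied with the data held fixed before the subsample is drawn, and confirming via exchangeability and Definition \ref{consistent_approximation} that $\E\,|e_i|$ is finite and common across $i$, so that the single rate $\delta_n$ legitimately governs both terms.
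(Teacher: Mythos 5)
Your proof is correct and is essentially the paper's own argument: the same cancellation reducing the error to (subsample mean of $e_i$) minus (population mean of $e_i$), the same triangle inequality, and the same use of linearity plus the sampling probabilities to obtain the bound $2\,\E|\pi_i-\tilde{\pi}_i|\ra 0$. The only cosmetic difference is that you use the without-replacement inclusion probability $m/n$ (matching the main text), whereas the paper's supplementary proof uses with-replacement draw indicators with $\E[I_{ij}]=1/n$; both yield the identical bound.
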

\vspace{-0.5\baselineskip}
\begin{proof}
See the supplementary material.
\end{proof}

\begin{proposition} \label{pa3}
Let the subsampling size $m$ and the number of posterior draws $S$ be fixed at arbitrary integer numbers, let the data size $n$ grow, assume that (i)-(vi) hold and let $q=q_n(\cdot|y)$ be any consistent approximate posterior. Write $\hat{\theta}_q = \arg\max\{q(\theta): \theta \in \Theta\}$ and assume further that $\hat{\theta}_q$ is a consistent estimator of $\theta_0$. Then
\[\tilde{\pi}_i \ra \pi_i\]
in mean for any of the following choices of $\tilde{\pi}_i$, $i=1,\ldots,n$.
\begin{itemize}
  \item[(a)] $\tilde{\pi}_i = \log p(y_i|\hat{\theta}_q)$.
  \item[(b)] $\tilde{\pi}_i = \log p(y_i|y)+V_{\theta \sim p(\cdot|y)} (\log p(y_i|\theta))$.
  \item[(c)] $\tilde{\pi}_i = \log p(y_i|y) - \nabla \log p(y_i|\hat{\theta})^T \Sigma_\theta \nabla \log p(y_i|\hat{\theta})$ for any given fixed $\hat{\theta}$ and where the covariance matrix is with respect to $\theta \sim p(\cdot|y)$.
  \item[(d)] $\tilde{\pi}_i = \log p(y_i|y) - \nabla \log p(y_i|\hat{\theta})^T \Sigma_\theta \nabla \log p(y_i|\hat{\theta}) - \frac12 \rm{tr}(H_{\hat{\theta}} \Sigma_\theta H_{\hat{\theta}}) \Sigma_\theta)$ for any given fixed $\hat{\theta}$ and where the covariance matrix is as in (c)
  \item[(e)] $\tilde{\pi}_i = \log \hat{p}(y_i|y_{-i})$ as defined in (\ref{eq:is_estimate}).
\end{itemize}
\end{proposition}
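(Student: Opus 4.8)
The plan is to prove a single \emph{anchoring} fact — that every quantity in sight converges in mean to the oracle value $\log p(y_i|\theta_0)$ — and then close each case by the triangle inequality $\E|\pi_i-\tilde{\pi}_i|\le\E|\pi_i-\log p(y_i|\theta_0)|+\E|\tilde{\pi}_i-\log p(y_i|\theta_0)|$. The engine is a log-Lipschitz bound extracted from assumption (i): applying the hypothesis once with reference $\theta_2=\theta$ and once with $\theta_2=\theta_0$ gives the two-sided sandwich $p(y|\theta_0)/(1+C(y)\n\theta-\theta_0\n)\le p(y|\theta)\le p(y|\theta_0)(1+C(y)\n\theta-\theta_0\n)$, hence $|\log p(y|\theta)-\log p(y|\theta_0)|\le C(y)\n\theta-\theta_0\n$ for \emph{all} $\theta$, with no smallness restriction. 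Letting $\theta_1\ra\theta$ in (i) also yields $\n\nabla\log p(y|\theta)\n\le C(y)$. I would use these two facts throughout, together with $\E_{y\sim p_t}[C(y)^2]<\iy$ and the boundedness of $\Theta$ (so that $\n\theta-\theta_0\n$ and the posterior covariance are bounded, enabling dominated convergence).

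First I would establish the anchor $\pi_i\ra\log p(y_i|\theta_0)$ in mean. Writing $\pi_i=\log\int p(y_i|\theta)\,dp(\theta|y_{-i})$ and integrating the sandwich against the posterior $p(\theta|y_{-i})$ — using Jensen on the convex lower envelope $x\mapsto 1/(1+cx)$ — gives $|\pi_i-\log p(y_i|\theta_0)|\le C(y_i)\,\E_{\theta|y_{-i}}\n\theta-\theta_0\n$. Since, given $\theta_0$, $y_i$ is independent of $y_{-i}$, the conditional expectation factorizes as $\E[C(y_i)]\cdot\E[\E_{\theta|y_{-i}}\n\theta-\theta_0\n]$; the first factor is finite and the second tends to $0$ by Bernstein--von Mises (assumption (v)), which forces the leave-one-out posterior to concentrate at $\theta_0$, with the bound $\mathrm{diam}(\Theta)$ justifying the limit. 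The same computation with $y_{-i}$ replaced by the full data $y$, using Cauchy--Schwarz in place of the factorization (as $y_i$ now enters the posterior), shows $\log p(y_i|y)\ra\log p(y_i|\theta_0)$ in mean; this handles the leading term of (b), (c) and (d).

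With the anchor in hand, each case reduces to showing the correction vanishes in mean. For (a), the log-Lipschitz bound gives $|\tilde{\pi}_i-\log p(y_i|\theta_0)|\le C(y_i)\n\hat{\theta}_q-\theta_0\n$, and Cauchy--Schwarz with consistency of $\hat{\theta}_q$ and boundedness of $\Theta$ closes it. For (b), the variance term obeys $V_{\theta\sim p(\cdot|y)}(\log p(y_i|\theta))\le C(y_i)^2\,\E_{\theta|y}\n\theta-\theta_0\n^2$, dominated by the integrable $C(y_i)^2\mathrm{diam}(\Theta)^2$ and tending to $0$ (so the limit is $\log p(y_i|\theta_0)$ regardless of the sign on the variance). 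For (c), the quadratic correction is bounded by $C(y_i)^2\n\Sigma_\theta\n$ using $\n\nabla\log p(y_i|\hat{\theta})\n\le C(y_i)$, and $\n\Sigma_\theta\n\ra0$ since the posterior covariance shrinks at rate $O(1/n)$ by Bernstein--von Mises; part (d) is identical, with the extra trace term of order $\n\Sigma_\theta\n^2$ vanishing even faster. Finally for (e), I would observe that the self-normalized estimate $\hat{p}(y_i|y_{-i})$ is a weighted harmonic mean of the values $p(y_i|\theta_s)$, hence lies in the interval they span; the log-Lipschitz bound then yields $|\log\hat{p}(y_i|y_{-i})-\log p(y_i|\theta_0)|\le C(y_i)\max_{s\le S}\n\theta_s-\theta_0\n$, and since $S$ is fixed and each draw $\theta_s$ from the consistent approximation $q_n$ concentrates at $\theta_0$, dominated convergence finishes the case.

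I expect the main obstacle to be integrability: the random Lipschitz constant $C(y_i)$ is only square-integrable and unbounded, so naive pointwise-to-$L^1$ passages fail. The two-sided log-Lipschitz sandwich — valid for all $\theta$ rather than only in a neighborhood — is precisely what tames this, since it linearizes the $\log$ without a smallness hypothesis and lets dominated convergence run with the fixed envelope $C(y_i)^2\mathrm{diam}(\Theta)^2$. A secondary subtlety is that Bernstein--von Mises delivers posterior concentration only \emph{in probability} over the data; this is reconciled with convergence of expectations by the dominated-convergence-in-probability principle, again using boundedness of $\Theta$. The one genuinely new regularity demand is in (d), where bounding the trace term needs an integrable control on the Hessian $H_{\hat{\theta}}$ that is not literally implied by (i); I would draw this from the second-order regularity already assumed for Bernstein--von Mises in (v).
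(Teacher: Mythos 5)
Your proposal is correct and follows essentially the same route as the paper's proof: both anchor $\pi_i$ and every choice of $\tilde{\pi}_i$ at the oracle value $\log p(y_i|\theta_0)$ using the Lipschitz condition (i) to linearize the log, then combine posterior consistency (Bernstein--von Mises), boundedness of $\Theta$, and Cauchy--Schwarz/dominated convergence, closing with the triangle inequality. Your refinements --- the two-sided log-Lipschitz sandwich in place of the paper's separate positive-part bounds, the independence factorization $\E[C(y_i)]\cdot\E\left[\E_{\theta|y_{-i}}\|\theta-\theta_0\|\right]$ for the leave-one-out term, the max-based interval bound for the self-normalized IS estimator in (e) instead of the paper's sum bound, the explicit gradient bound $\|\nabla\log p(y_i|\hat{\theta})\|\le C(y_i)$ that cases (c)--(d) tacitly require, and flagging the Hessian integrability needed in (d) --- are all sound and, if anything, make the argument more complete than the paper's terse treatment of those cases.
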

\vspace{-0.5\baselineskip}
\begin{proof}
See the supplementary material.
\end{proof}
\vspace{-0.5\baselineskip}

Proposition \ref{pa2} and \ref{pa3} generalizes the scaling properties of \citet{magnusson2019bayesian}, namely that in the limit, we essentially only need a subsampling size of $m=1$ and $S=1$ draw from the posterior to estimate the $\overline{\text{elpd}}_\text{loo}$ exact using the difference estimator. Proposition \ref{pa2} show that for any $\tilde{\pi}$ that converges in mean to $\pi$ the difference estimator will converge in mean to the true $\overline{\text{elpd}}_\text{loo}$. Using Proposition \ref{pa3} we also have that the favorable scaling properties holds also for WAIC, our proposed approximations of WAIC, and using importance sampling for any choice of $S$. The results also hold for posterior approximations, as long as consistent posterior approximations are used, such as variational inference and Laplace approximations for regular models. In the supplementary material we also extend Proposition \ref{pa3} to additional choices of $\tilde{\pi}$.


\subsection{Computational Cost}

The computational cost of the proposed method will depend on the total number of observations for which we need to compute $\tilde{\pi}_i$ and hence, in most situations, computing $\tilde{\pi}$ will dominate. This makes it relevant to understand how these costs relate to the total number of parameters in the likelihood function $P$ (not the total number of parameters in the model) and the total number of posterior draws $S$. 
The overall cost for the different approximations proposed is presented in Table \ref{tab:approx_tilde}. Computing the full PSIS-LOO has the cost of $O(nPS)$, given that the evaluation of the log-likelihood is linear in $P$, i.e., the same complexity as WAIC, but with larger constants. Different trade-offs can be made depending on the specific likelihood where the approximation cost range from the cheapest, the plpd, to the most costly, WAIC/TIS with a large number of posterior draws $S$. The plpd only computes the log-likelihood once, while the full WAIC/TIS approach needs to compute it $S$ times.

\begin{table}[t]
  \centering
  \begin{tabular}{lll}
    $\tilde{\pi}$     & Needs     & Cost \\
    \midrule
    plpd & $\hat{\theta}$ & $O(nP)$     \\
    $\text{TIS}_\text{S}$ & $S$ draws from $p(\theta|y)$  & $O(nPS)$     \\    
    $\text{WAIC}_\text{S}$ & $S$ draws from $p(\theta|y)$  & $O(nPS)$     \\    $\Delta_1\text{WAIC}_\text{m}$ & $\nabla \log p(y_i|\theta)$, $\hat{\theta}$ and $\text{tr}(\Sigma_{\theta})$ & $O(nP)$  \\    
    $\Delta_1\text{WAIC}$ & $\nabla \log p(y_i|\theta)$, $\hat{\theta}$ and $\Sigma_{\theta}$ & $O(nP^2)$     \\
    $\Delta_2\text{WAIC}$  & $\nabla \log p(y_i|\theta)$, $H_\theta$, $\hat{\theta}$ and $\Sigma_{\theta}$ & $O(nP^3)$     \\
  \end{tabular}
  \caption{Computational costs for approximations of $\pi$.}
    \label{tab:approx_tilde}
\end{table}

\begin{table}[t]
\centering
\begin{tabular}{ll}
$M$ & Description \\ 
  \midrule
 1 & Full pooling \\ 
 2 & Partial pooling \\ 
 3 & No pooling \\ 
 4 & Variable intercept \\
 5 & Variable slope \\ 
 6 & Variable intercept and slope \\  
\end{tabular}
\caption{Radon models. For full model specification, see supplementary material.} 
\label{tab:radon_models}

\end{table}

\section{EXPERIMENTS}

We study the proposed method using both simulated and real data.  We simulated datasets with $10^4$, $10^5$, and $10^6$ observations to fit Bayesian linear regression (BLR) models. The simulated data is generated with different signal-to-noise ratios resulting in $R^2$ values of approximately 0.1, 0.5 and 0.9. To simulate sparse regression, we generated another dataset with only one covariate with $\beta \neq 0$. See supplementary material for the details of simulations. As the first real data, we use the radon data of \citet{lin1999analysis}. The dataset consists of roughly 12~000 radon level measurements in 400 counties (groups). Table \ref{tab:radon_models} lists different non-hierarchical and hierarchical models used. The exact model specification with Stan code can be found in the supplementary material. To compare different logistic models with simple linear effects, interaction effects and splines, we use the arsenic wells data of \citet{gelman2006data} with 3~020 observations. The datasets are big enough to demonstrate the most important properties, but small enough to be easily fit using MCMC as a gold standard.

We use Stan \citep{carpenter2017stan,standev2018stancore} for inference using 4 chains, a sample size of 2~000, a warmup of 1~000 iterations and a dynamic HMC algorithm \citep{hoffman2014no,Betancourt2017} and the \texttt{rstanarm} R package \citep{rstanarm} to fit spline models. Convergence diagnostics were made using $\hat{R}$ diagnostic \citep{Gelman+etal+BDA3:2013} and HMC specific diagnostics \citep{Betancourt2017}. For simplicity, in this paper we limit the scope to MCMC, but the proposed approach is trivial to use with any consistent posterior approximation using the approximation correction of \citet{magnusson2019bayesian}. Our approach has been implemented based on the \texttt{loo} R package \citep{vehtari2018loo} framework for Stan and is available at \url{https://cran.r-project.org/package=loo}.

\begin{table*}[t]
\centering
\begin{tabular}{llrrrrrrrr}
Data & $M$ & SE($\widehat{\text{elpd}}_\text{diff,WAIC2k}$) & SE($\widehat{\text{elpd}}_\text{HH,plpd}$)  & SE($\widehat{\text{elpd}}_\text{HH,WAIC2k}$) & SE($\widehat{\text{elpd}}_\text{SRS}$) & $\hat{\sigma}_\text{loo}$ \\ 
  \midrule
Radon & 1 & 0.002 & 0.7  & 0.001 & 949 & 88 \\ 
   & 2 & 1.0 & 42  & 0.81 & 1012 & 94 \\ 
   & 3 & 9.2 & 74  & 6.4 & 1012 & 94 \\    
   & 4 & 1.0 & 40 & 0.79 & 1013 & 94 \\ 
  & 5 & 13 & 84  & 12 & 972 & 90 \\   
   & 6 & 10 & 97 & 15 & 1050 & 96 \\ 
  \midrule   
  $R^2=0.9$& BLR  & 0.03 & 4.0 & 0.02 & 704 & 70 \\ 
  $R^2=0.5$&  & 0.04  & 4.9  & 0.03 & 711 & 72 \\ 
  $R^2=0.1$&  & 0.04 & 5.8  & 0.03  & 756 & 76 \\ 
\end{tabular}
\vspace{-0.2\baselineskip}
\caption{The subsampling standard error (SE) for individual models of different subsampling approaches to estimate $\text{elpd}_\text{loo}$ with $m=100$. The results are averaged over 100 set of subsamples. $\hat{\sigma}_{\text{loo}}$ is the estimated standard deviation of the LOO-CV for comparison.} 
\label{tab:hh_vs_diff_performance}
\vspace{-0.4\baselineskip}
\end{table*}

With the empirical evaluations, we study the following research questions: (1) does using better approximations of $\pi$ improve the empirical performance, (2) which approximation $\tilde{\pi}$ should be preferred, (3) how does the difference estimator compare with the HH approach, and (4) how well does the method scale for large-data model comparison?

\subsection{Performance}

Table \ref{tab:hh_vs_diff_performance} shows the estimator variance when including $p_\text{eff}$ in the estimation, by comparing our proposed approach with the HH method proposed in \citet{magnusson2019bayesian}. From the results we can see the benefit of including $p_\text{eff}$ in the estimation of $\text{elpd}_\text{loo}$. Including $p_\text{eff}$ improves the estimation by orders of magnitude compared to using only the plpd, both for the HH as well as for the difference estimator, both in turn improve with orders of magnitude over simple random sampling. Table \ref{tab:approx} shows similar results where we see that all approximations of $\tilde{\pi}$ that include $p_\text{eff}$ improves over using just the plpd as approximate surrogate. 

The downside of using the $\text{WAIC}_{2k}$, WAIC based on 2~000 draws, as the approximate surrogate variable is that it is computationally costly (see Table \ref{tab:approx_tilde}). Hence, it is of interest to study the performance of the different surrogate approximations shown in Table \ref{tab:approx}. We can see that any approximation is better than just using the plpd, as noted above. Using $\Delta_1 \text{WAIC}_\text{m}$, that have the same computational complexity as the plpd, is better in all models, showing the benefit of including $p_{\text{eff}}$ in $\tilde{\pi}_i$. Table \ref{tab:approx_tilde} shows that as we improve the approximation we get better and better estimates of the $\text{elpd}_\text{loo}$, even though the benefit of the better approximations varies from model to model. TIS, computed by averaging 2~000 posterior draws, is the most accurate approach. Hence, we are confronted with the trade-off between sampling size and cost of computing $\tilde{\pi}$. 


\begin{table*}[t]
\centering
\begin{tabular}{rrrrrrrrrrr}
$M$ & $\text{WAIC}_{2k}$ & $\text{TIS}_{2k}$ & $\text{WAIC}_{100}$ & $\text{TIS}_{100}$ & $\Delta_2 \text{WAIC}$ & $\Delta_1 \text{WAIC}$ & $\Delta_1 \text{WAIC}_\text{m}$ & plpd\\ 
  \midrule
   1 & 0.0 & 0.0 & 1.6 & 1.6 & 0.5 & 0.5 & 0.6 & 1 \\ 
   2 & 1.0 & 0.2 & 21 & 20 & 30 & 31 & 31 & 53 \\ 
   3 & 9.2 & 1.7 & 29 & 29 & 45 & 56 & 56 & 87 \\ 
   4 & 1.0 & 0.3 & 22 & 22 & 29  & 30 & 30 & 51 \\ 
   5 & 13 & 9.8 & 26 & 36 & 32 & 39 & 40 & 81 \\
   6 & 10 & 7.5 & 34 & 42 & 50 & 57 & 90 & 107 \\

\end{tabular}
\vspace{-0.4\baselineskip}
\caption{SE($\text{elpd}_\text{diff}$) using different approximations $\tilde{\pi}$ for the Radon data with $m=100$. The results are averaged over 100 set of subsamples.} 
\label{tab:approx}
\vspace{-1\baselineskip}
\end{table*}

Table \ref{tab:hh_vs_diff_performance} contains the results of both the HH estimator and the difference estimator with $\text{WAIC}_{2k}$ as $\tilde{\pi}$. The performance between the estimators are similar, but in most cases, the HH estimator is marginally better. This can be explained by the HH estimator being better at capturing heteroscedastic approximation errors
\citep[see][Ch. 7,9A]{cochran77}. Since we expect the approximation to be worse for smaller values of $\pi_i$, this explains the difference in performance. 


\begin{table}[ht]
\centering
\begin{tabular}{rrrrrrr}
$n$ & $M$ & $\tilde{\pi}$ & $\widehat{\text{elpd}}_\text{D}$ & SE($\widehat{\text{elpd}}_\text{D}$) \\ 
  \midrule
$10^5$ & RHS vs. N & $\text{plpd}$ & -47.6 & 5.2  \\ 
 & RHS vs. N & $\text{TIS}_{10}$ & -52.3 & 0.04  \\ 
  \midrule
$10^6$ & RHS vs. N & $\text{plpd}$ & -44.7 & 7.0  \\ 
 & RHS vs. N & $\text{TIS}_{10}$ & -49.3 & 0.04  \\ 
\end{tabular}
\caption{Comparing models with Normal (N) and regularized horse-shoe (RHS) prior using a subsampling size of $m=100$ and the difference estimator. The $\widehat{\text{elpd}}_D$ is the estimated difference in $\text{elpd}_\text{loo}$ between models with the subsampling SE of the estimate. $\hat{\sigma}_D \approx 9$ for all.}  
\label{tab:model_compare_hs}
\vspace{-0.4\baselineskip}
\end{table}


\begin{figure}[t]
    \centering
    \includegraphics[width=0.8\columnwidth]{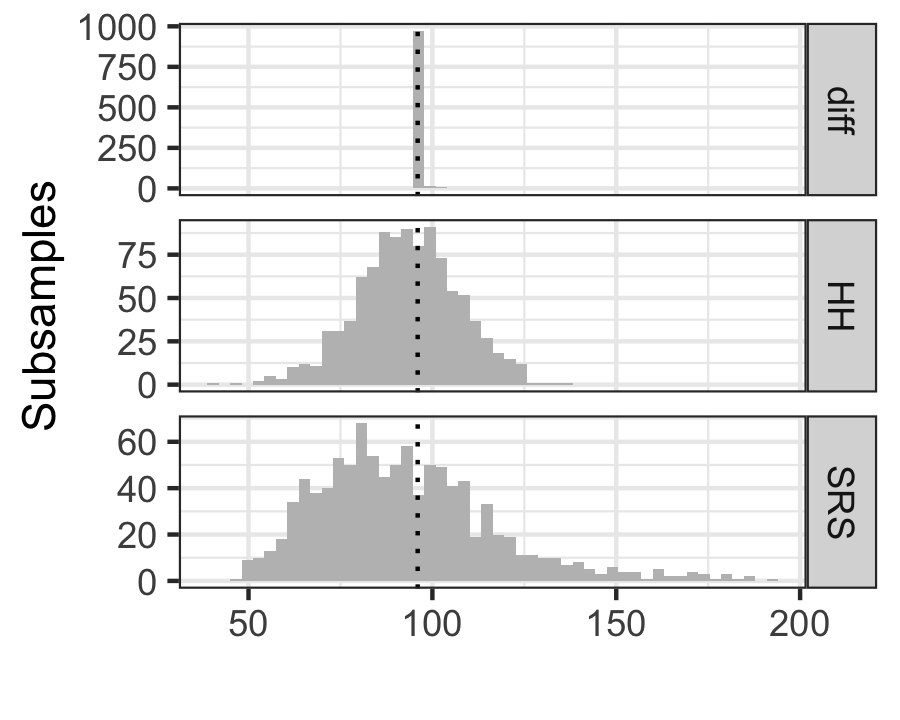}
    \vspace{-\baselineskip}
    \caption{1~000 estimates of $\sigma_\text{loo}$ for Radon Model 6 using $\text{TIS}_{2k}$ as $\tilde{\pi}$. True value is 96 (dotted line).}
    \label{fig:sigma_est}
    \vspace{-1.8\baselineskip}
\end{figure}

Figure \ref{fig:sigma_est} shows the benefit of the difference estimator in the estimation of $\sigma_\text{loo}$. It is quite clear that the difference estimator is much more efficient in estimating $\sigma_\text{loo}$, and that this efficiency comes from using both $\tilde{\pi}$ and $\tilde{\pi}^2$ as auxiliary variables in estimating $\sigma_\text{loo}$. To get the same efficiency for the HH estimator we would need to draw an additional subsample proportional to $\tilde{\pi}^2$ to reach a similar performance. Taken together, the HH estimator is marginally better in estimating elpd values for individual models. Although, the benefit of using $\tilde{\pi}_i$ in the estimation, instead of in the subsampling, is important in estimating $\sigma^2_\text{loo}$ as well as when comparing models.


\begin{table}[t]
\centering
\begin{tabular}{lrrrrrrrrr}
 $m$ & $M$ & $\widehat{\text{elpd}}_\text{D}$ & SE($\widehat{\text{elpd}}_\text{D}$) & $\hat{\sigma}_{D}$ \\ 

  \midrule
   100 & 6 vs. 4 & -233 & 69 & 22  \\
   & 6 vs. 2 & -303 & 35 & 26  \\
   & 6 vs. 3 & -333 & 57 & 25  \\
   & 6 vs. 5 & -1451 & 32 & 51 \\    
   & 6 vs. 1 & -1778 & 35 & 57 \\
   \midrule
   400 & 6 vs. 4 &  -236 & 22 & 24 \\ 
   & 6 vs. 3 &  -294 & 20 & 26 \\  
   & 6 vs. 2 &  -298 & 16 & 27 \\
   & 6 vs. 5 &  -1466 & 13 & 52 \\   
   & 6 vs. 1 &  -1780 & 13 & 58 \\ 
\end{tabular}
\caption{Comparing models using a subsample of size $m=100,400$, the difference estimator, and $\text{TIS}_\text{100}$ as approximation. The $\widehat{\text{elpd}}_D$ is the estimated difference in $\text{elpd}_\text{loo}$ between models with the subsampling SE of the estimate and $\hat{\sigma}_D$. $\hat{\sigma}_D$ is the estimated standard deviation of the $\text{elpd}_D$ for comparison. The \emph{naive} $\hat{\sigma}_D$ estimate is approximately 130 for all models} 
\label{tab:model_compare_radon_100_400}
\vspace{-1.2\baselineskip}
\end{table}




\subsection{Model Comparison}

Table \ref{tab:model_compare_hs} shows a large-scale example of a Bayesian linear regression model with 100 covariates and 1 million and 100~000 simulated data points with only one $\beta \neq 0$ to compare between a normal prior and the regularized horseshoe shrinkage prior \citep{piironen2017sparsity}. Using just the plpd we get a good approximation of $\pi$ so a subsample of $m=100$ is again sufficient to estimate the $\text{elpd}$ with sufficient accuracy. Using a better LOO surrogate, such as TIS with only 10 posterior draws, increases the accuracy considerably for these large datasets, without much additional computational cost in computing $\tilde{\pi}$. Table \ref{tab:model_compare_hs} also shows the positive scaling characteristics, the size of subsample needed to compare models does not change much, even though the total number of folds in LOO is increased tenfold.

Table \ref{tab:model_compare_radon_100_400} shows an example of comparing the different models for the Radon data based on an individual subsample. Using only a subsample of size 100 we can roughly identify which models should be preferred. The estimated $\sigma_\text{D}$ for difference in elpd compared to the reference model is much smaller than a naive approach where only the $\sigma_\text{loo}$ of individual models are used, i.e.,  $\sigma^2_\text{naive} = V(\text{elpd}_A) + V(\text{elpd}_B)$. In this case, we use the best (and most complex) model as a reference with a subsample of size 100. Using a slightly bigger subsample size of 400, the subsampling uncertainty is reduced further and we can conclude that model 6 has the best predictive performance. Increasing the subsampling size is not very costly once $\tilde{\pi}$ has been computed.

With earlier approaches, these comparisons would be much more costly. Either we would need to compute $\pi_i$ for all models and observations (full LOO), or if we use the HH estimator, we would need to draw a new set of observations for each model, each model comparison, and if we want an efficient estimate of $\sigma_D$, two sets of observations per model comparison.


In Table \ref{tab:model_compare_arsenic} we compare a generalized additive spline logistic model (GAM) with linear models with and without interactions. For these models, using  $\mathrm{TIS}_{100}$ as approximation, we needed to increase the subsample size to 300 to compare the model performance. A small subsample size is sufficient as the model does not have a complex hierarchical structure and hence the approximation works well to compare even small differences in $\text{elpd}_D$.

\begin{table}[t]
\centering
\begin{tabular}{lrrrr}
Model & $\widehat{\text{elpd}}_\text{D}$ & SE($\widehat{\text{elpd}}_\text{D}$) & $\hat{\sigma}_{D}$ \\ 
  \midrule
  GAM vs. interaction & -21 & 1.4 & 7.2 \\ 
  GAM vs. linear & -29 & 1.2 & 7.8 \\ 
\end{tabular}
\caption{Comparing arsenic models using $m=300$ using the difference estimator and the $\text{TIS}_\text{100}$ as approximation. The $\widehat{\text{elpd}}_D$ is the estimated difference in $\text{elpd}_\text{loo}$ between models with the subsampling SE of the estimate. $\hat{\sigma}_D$ is the estimated standard deviation of the $\text{elpd}_D$.} 
\label{tab:model_compare_arsenic}
\vspace{-0.1\baselineskip}
\end{table}



\begin{table}[t]
\centering
\begin{tabular}{lrrrr}
Model & $\widehat{\text{elpd}}_\text{D}$ & SE($\widehat{\text{elpd}}_\text{D}$) & $\hat{\sigma}_{D}$ \\ 
  \midrule
  $D=100$ vs. $D=101$ & -0.5  & 0.03 & 0.6\\ 
  $D=100$ vs. $D=110$ & -4.3  & 0.04 & 3.3\\ 
  $D=100$ vs. $D=99$ & -9.7 & 0.04 & 5.1\\ 
  $D=100$ vs. $D=90$ & 47.7 & 0.02 & 11.8\\   
\end{tabular}
\caption{Comparing BLR models with different number of included covariates for $R^2\approx0.1$, using a subsample of size $m=100$ using the difference estimator and $\mathrm{TIS}_{100}$ as approximation. The $\widehat{\text{elpd}}_D$ is the estimated difference in $\text{elpd}_\text{loo}$ between models with the subsampling SE of the estimate. $\hat{\sigma}_D$ is the estimated standard deviation of the $\text{elpd}_D$.} 
\label{tab:model_compare_blr1000}
\vspace{-.6\baselineskip}
\end{table}


  

Finally, Table \ref{tab:model_compare_blr1000} contains another simulated example with 100 covariates ($\beta = 1$ and $R^2 \approx 0.1$) and study the effect of adding irrelevant and removing relevant covariates for $n=10~000$. Again we see that using only $m=100$ we can get accurate estimates for $\text{elpd}_D$. The results of Table \ref{tab:model_compare_blr1000} also show the well-known inconsistency of LOO \citep{shao1993linear} in selecting the most parsimonious model, so for feature selection other approaches should be used \citep[see][]{piironen2018projective}.

\section{Conclusions}

Comparing different models is an important, but often overlooked, part of the process of predictive modeling. We propose a method for comparing and choosing between probabilistic models that are well suited to Bayesian model comparison for large data. First, using the difference estimator is much better suited to the setting of large-data model comparison. By using $\tilde{\pi}$ in the estimation rather than in sampling we reduce considerably the subsample size needed compared with approaches such as \citet{magnusson2019bayesian}, both when comparing models and estimating $\sigma_\text{loo}$ and $\sigma_\text{D}$. Second, including the number of efficient parameters in the auxiliary variable when estimating the elpd improves with orders of magnitude over not using this information. But using the better surrogate approximations is costly and introduce an accuracy-computational cost trade-off. If the gradient of the likelihood with respect to likelihood parameters is available, $\Delta_1\text{WAIC}_\text{m}$ can be used to improve performance without an additional computational cost compared to plpd. In all, we recommend using plpd as approximation for simpler models, while $\text{TIS}_{100}$ is recommended when comparing more complex hierarchical models.

We should not be too greedy and choose a very small subsample. A subsample that is too small, such as $m=10$, may, due to randomness, miss observations for which the approximation is bad, but still are not too uncommon among the observations. 

There are many additional possible improvements that we leave as future work. First, here we use the difference estimator with a simple random sample. We can use more or less any sampling strategy, such as stratifying the data based on the design matrix or by identified difficult observations. This can be further be used by adaptive optimal allocations between these strata. Second, we can further study other approaches to approximate LOO efficiently, here \citet{wang2018approximate,giordano2019swiss} are promising methods. As long as the approximation will converge in mean to $\pi$ the theory holds, making the method general as well as highly tunable for the specific problem at hand. 

In all, we propose a scalable method for fast model comparisons in case of large data. 

\newpage

\bibliographystyle{plainnat} 
\bibliography{references}

\newpage

\onecolumn

\setcounter{proposition}{0}

\section*{SUPPLEMENTARY MATERIAL}

\section*{Proofs}
\label{proofs}

The main quantity of interest is the mean expected log pointwise predictive density, which we want to use for model evaluation and comparison.

\begin{definition}[$\overline{elpd}$]
\label{definition}
The \emph{mean expected log pointwise predictive density} for a model $p$ is defined as
\[
\overline{\mathrm{elpd}} = \int p_t(x) \log p(x) \, dx
\]

where $p_t(x) = p(x|\theta_0)$ is the \emph{true} density at a new unseen observation $x$ and $\log p(x)$ is the log predictive density for observation $x$.

\end{definition}

We estimate $\overline{\mathrm{elpd}}$ using {\em leave-one-out cross-validation (loo)}.

\begin{definition}[Leave-one-out cross-validation]
\label{loo}
The loo estimator $\overline{\mathrm{elpd}}_{\rm loo}$ is given by

\begin{equation}
\label{elpdloo}
\overline{\mathrm{elpd}}_\mathrm{loo} = \frac{1}{n} \sum^n_{i = 1} \pi_i,
\end{equation}

where $\pi_i=\log p(y_i|y_{-i}) = \int \log p(y_i|\theta)p(\theta|y_{-i})d\theta$.

\end{definition}

To estimate $\overline{elpd}_{\rm loo}$ in turn, we use difference estimator. Definitions follow.

\begin{definition}
\label{diffloo}
Let $\tilde{\pi}_i$ be any approximation of $\pi_i$. The difference estimator of $\overline{elpd}_{\rm loo}$ based on $\tilde{\pi}_i$ is given by

\[
\widehat{\overline{\mathrm{elpd}}}_{\rm loo,diff} = \frac{1}{n}\left(\sum_{i=1}^{n}\tilde{\pi}_i + \frac{n}{m}\sum_{j\in\mathcal{S}}(\pi_j-\tilde{\pi}_j)\right),
\]
where $\mathcal{S}$ is the subsample set, $m$ is the subsampling size, and the probability of subsampling observation $i$ is $1/n$, i.e.\ the subsample is uniform with replacement.

\end{definition}

One important estimator of $\pi_i$ among others is the importance sampling estimator
\begin{equation} \label{ephat}
\log \hat{p}(y_i|y_{-i}) = \log\left( \frac{\frac{1}{S} \sum_{s=1}^S p(y_i|\theta_s) r(\theta_s)}{\frac{1}{S} \sum_{s=1}^S r(\theta_s)} \right)\,,
\end{equation}

where $r(\theta)$ is any suitable weight function such that $0 < r(\theta) < \iy$ for all $\theta \in \Theta$ and $(\theta_1,\ldots,\theta_S)$ is a sample from a suitable approximation of the posterior $p(\theta|y)$. We are in particular interested in the weight function

\begin{align} \label{weights}
    r(\theta_s) = & \frac{p(\theta_s|y_{-i})}{p(\theta_s|y)} \frac{p(\theta_s|y)}{q(\theta_s|y)} \nonumber \\
    \propto & \frac{1}{p(y_i|\theta_s)} \frac{p(\theta_s|y)}{q(\theta_s|y)}
\end{align}

and where $q(\cdot|y)$ is an approximation of the posterior distribution that satisfies for each $y$ that $q(\theta|y)$ iff $\theta \in \Theta$ , $\theta_s$ is a sample point from $q$ and $S$ is the total posterior sample size. (The condition on $q$ makes sure that $0 < r(\theta) < \iy$ for all $\theta$.)

In the case of truncated importance sampling, we instead truncate these weights and replace $r$ with $r_\tau$ given by 
\begin{align} \label{trunc_weights}
    r_\tau (\theta_s) = \min(r (\theta_s), \tau)\,,
\end{align}
where $\tau>0$ is the weight truncation \citep[see][for a more elaborate discussion on the choice of $\tau$]{ionides2008truncated}.  

\subsection*{Proof of Proposition 1}

\begin{proposition} \label{pa0}
The estimators $\widehat{\mathrm{elpd}}_\mathrm{diff}$ and $\hat{\sigma}^2_\mathrm{loo}$ are unbiased with regard to 
$\mathrm{elpd}_\mathrm{diff}$ and $\sigma^2_\mathrm{loo}$.
\end{proposition}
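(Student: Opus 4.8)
The plan is to establish the two unbiasedness claims in turn, since the unbiasedness of the point estimator is needed for that of the variance estimator. Throughout I would fix the sampling design of Definition~\ref{diffloo} (each of the $m$ indices drawn uniformly and independently, i.e.\ with replacement) and write $e_i = \pi_i - \tilde{\pi}_i$.

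First I would handle $\widehat{\text{elpd}}_\text{diff,loo}$. Splitting it into the deterministic term $\sum_{i=1}^n \tilde{\pi}_i$ and the random correction $\frac{n}{m}\sum_{j\in\mathcal{S}} e_j$, the only step is to note that each sampled summand has expectation $\E[e_J] = \frac{1}{n}\sum_{i=1}^n e_i$; averaging $m$ such terms and multiplying by $n/m$ recovers $\sum_{i=1}^n e_i$ exactly. By linearity, $\E[\widehat{\text{elpd}}_\text{diff,loo}] = \sum_i \tilde{\pi}_i + \sum_i(\pi_i - \tilde{\pi}_i) = \sum_i \pi_i = \text{elpd}_\text{loo}$, and dividing by $n$ gives the claim for $\widehat{\overline{\text{elpd}}}_{\rm diff,loo}$.

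Next I would target $\sigma^2_\text{loo} = \sum_{i=1}^n \pi_i^2 - \frac{1}{n}\left(\sum_{i=1}^n \pi_i\right)^2$, i.e.\ $n$ times the population variance of the $\pi_i$, and show the estimator is an unbiased estimate of $\sum_i \pi_i^2$ minus $\frac{1}{n}$ times an unbiased estimate of $\left(\sum_i \pi_i\right)^2$. The first two terms of $\hat{\sigma}^2_\text{diff,loo}$ are exactly the difference estimator of the preceding paragraph applied to $\pi_i^2$ in place of $\pi_i$, so they are unbiased for $\sum_i \pi_i^2$. Writing $\widehat{\text{elpd}}_\text{diff,loo} = A + D$ with $A = \sum_i \tilde{\pi}_i$ deterministic and $D = \frac{n}{m}\sum_{j\in\mathcal{S}} e_j$, the identity $2A\widehat{\text{elpd}}_\text{diff,loo} - A^2 = \widehat{\text{elpd}}_\text{diff,loo}^2 - D^2$ recombines the remaining terms into $\frac{1}{n}\bigl(\widehat{\text{elpd}}_\text{diff,loo}^2 - V(\widehat{\text{elpd}}_\text{diff,loo})\bigr)$, which enters $\sigma^2_\text{loo}$ with a minus sign. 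Because $\E[\widehat{\text{elpd}}_\text{diff,loo}^2] = \left(\sum_i \pi_i\right)^2 + \Var(\widehat{\text{elpd}}_\text{diff,loo})$ by the first part, subtracting the variance correction leaves $\left(\sum_i \pi_i\right)^2$ in expectation, and the two pieces assemble to $\sigma^2_\text{loo}$.

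The hard part will be the variance correction itself: the proof only goes through if the plug-in $V(\widehat{\text{elpd}}_\text{diff,loo})$ from Eq.~\eqref{diff_est_var} is unbiased for the true sampling variance $\Var(\widehat{\text{elpd}}_\text{diff,loo})$. This reduces to the standard unbiasedness of the sample-variance estimator $s_e^2$ under the chosen design, and the constants in Eq.~\eqref{diff_est_var} (the $m-1$ denominator, and the finite-population correction $1-m/n$ in the without-replacement version) are exactly what make $\E[V(\widehat{\text{elpd}}_\text{diff,loo})] = \Var(\widehat{\text{elpd}}_\text{diff,loo})$ hold. I would therefore prove this unbiasedness as a short lemma first, being explicit about which sampling scheme is used, since the design enters only through the form of $\Var(\widehat{\text{elpd}}_\text{diff,loo})$ and this constant, not through the structure of the argument. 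The remaining work is routine bookkeeping to confirm the cross terms cancel with the correct sign.
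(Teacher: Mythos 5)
Your proposal is correct and follows essentially the same route as the paper's own proof: the uniform-draw expectation argument for $\widehat{\text{elpd}}_\text{diff,loo}$, then splitting $\sigma^2_\text{loo}$ into an unbiased difference estimate of $\sum_i\pi_i^2$ (auxiliary variable $\tilde{\pi}_i^2$) minus $\tfrac{1}{n}$ times an unbiased estimate of $\bigl(\sum_i\pi_i\bigr)^2$ obtained by subtracting the plug-in variance, where your identity $2A\widehat{\text{elpd}}_\text{diff,loo}-A^2=\widehat{\text{elpd}}_\text{diff,loo}^2-D^2$ is just a tidier packaging of the paper's expansion $(t_\pi-t_{\tilde\pi})^2+2t_{\tilde\pi}t_\pi-t_{\tilde\pi}^2=t_\pi^2$. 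The one step you single out for a lemma, $\E[V(\widehat{\text{elpd}}_\text{diff,loo})]=\Var(\widehat{\text{elpd}}_\text{diff,loo})$, is precisely what the paper asserts without proof via $\E[s_e^2]=S_e^2$, and your insistence on pinning down the design is warranted: the finite-population correction in Eq.~\eqref{diff_est_var} makes that assertion true for sampling \emph{without} replacement, while the supplementary's definition of the subsample says it is drawn with replacement, an inconsistency your explicit lemma would surface and resolve.
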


\begin{proof}

We start out by proving unbiasedness for the general estimator. Write the difference estimator as
\[
\widehat{\mathrm{elpd}}_\mathrm{loo,diff} = \sum_{i=1}^{n}\tilde{\pi}_i + \frac{n}{m}\sum_{i=1}^{n}\sum_{j\in\mathcal{S}}I_{ij}(\pi_j-\tilde{\pi}_j),
\]
where $I_{ij}$ is the indicator that data point $i$ is chosen as the $j$'th point of the subsample. Since $\E[I_{ij}]=1/n$, the expectation of the double sum is $\sum_i(\pi_i-\tilde{\pi}_i)$ and $\E[\widehat{\mathrm{elpd}}_\mathrm{loo,diff}] = \sum_i\pi_i$ as desired.

Next we prove unbiasedess of $\hat{\sigma}^2_\mathrm{loo,diff}$. We are interested in estimating the finite sampling variance using the difference estimator. This can be done as 

\begin{align}
\sigma^2_{\rm loo} = & \frac{1}{n} \sum^n_{i=1} (\pi_i -\bar{\pi})^2 \\
= & \frac{1}{n} \underbrace{\sum^n_{i=1} \pi_i^2}_a - {\underbrace{\left(\frac{1}{n}\sum^n_{i=1} \pi_i\right)}_b}^2
\end{align}

We can estimate $a$ and $b$ separately as follows. The first part can be estimated using the difference estimator with $\tilde{\pi}_i^2$ as auxiliary variable. Let $t_{\epsilon} = \sum^n_i \epsilon_i = \sum^n_i {\pi}^2_i - \tilde{\pi}^2_i =  t_{\pi^2} - t_{\tilde{\pi}^2}$, the we can estimate $a$ as

\[
\hat{a} = \frac{1}{n} (t_{\tilde{\pi}^2} + \hat{t}_{\epsilon})\,,
\]
where 
\[
\hat{t}_{\epsilon} = \frac{n}{m} \sum_{j\in\mathcal{S}} \left(\pi^2_j - \tilde{\pi}^2_j\right)\,.
\]

From the previous section, it follows directly that 

\[
E(\hat{a}) = \frac{1}{n} t_{\pi^2} = \frac{1}{n} \sum^n_{i=1} \pi_i^2,
\]

The second part, $b$, can then be estimated as 

\begin{align}
\hat{b} = & \frac{1}{n^2} \left[\hat{t}^2_e - v(\hat{t}_e) + 2 t_{\tilde{\pi}} \hat{t}_{\pi} - t_{\tilde{\pi}}^2 \right]\,,
\end{align}

with the expectation

\begin{align}
E(\hat{b}) = & \frac{1}{n^2} \left[E(\hat{t}^2_e) - E(v(\hat{t}_e)) + 2 t_{\tilde{\pi}} E(\hat{t}_{\pi}) - t_{\tilde{\pi}}^2 \right]\\
= & \frac{1}{n^2} \left[V(\hat{t}_e) + E(\hat{t}_e)^2 - V(\hat{t}_e) + 2 t_{\tilde{\pi}} t_{\pi} - t_{\tilde{\pi}}^2 \right]\\
= & \frac{1}{n^2} \left[t_e^2 + 2 t_{\tilde{\pi}} t_{\pi} - t_{\tilde{\pi}}^2 \right]\\
= & \frac{1}{n^2} \left[(t_{\pi}-t_{\tilde{\pi}})^2 + 2 t_{\tilde{\pi}} t_{\pi} - t_{\tilde{\pi}}^2 \right]\\
= & \frac{1}{n^2} t_{\pi}^2 = \left(\frac{1}{n} \sum_i^n {\pi_i}\right)^2
\end{align}

Using that

\begin{align}
E(v(\hat{t}_e)) = n^2 \left(1 - \frac{m}{n}\right) \frac{E(s^2_e)}{m} = n^2 \left(1 - \frac{m}{n}\right) \frac{S^2_e}{m} =V(\hat{t}_e)\,.
\end{align}

Combining the results we have that

\begin{align}
E(\hat{a} - \hat{b}) = \frac{1}{n} \sum^n_{i=1} \pi_i^2 - {\left(\frac{1}{n}\sum^n_{i=1} \pi_i\right)}^2 = \sigma_\mathrm{loo}^2 \,.
\end{align}

\end{proof}

{\bf Remark.} We believe this has probably been proven before, and hence this is probably not a new theoretical result.


\subsection*{Proof of Proposition 2 and 3}

The proof follows, in general, the proof of \citet{magnusson2019bayesian}. A generic Bayesian model is considered; a sample $(y_1,y_2,\ldots,y_n)$, $y_i \in \mathcal{Y} \subseteq \RR$, is drawn from a true density $p_t = p(\cdot|\theta_0)$ for some true parameter $\theta_0$. The parameter $\theta_0$ is assumed to be drawn from a prior $p(\theta)$ on the parameter space $\Theta$, which we assume to be an open and bounded subset of $\RR^d$.

\medskip

Several conditions are used. They are as follows.

\begin{itemize}
  \item[(i)] the likelihood $p(y|\theta)$ satisfies that there is a function $C:\mathcal{Y} \ra \RR_+$, such that $\E_{y \sim p_t}[C(y)^2] < \iy$ and such that for all $\theta_1$ and $\theta_2$, $|p(y|\theta_1)-p(y|\theta_2)| \leq C(y)p(y|\theta_2)\n\theta_1-\theta_2\n$.
   \item[(ii)] $p(y|\theta)>0$ for all $(y,\theta) \in \mathcal{Y} \times \Theta$,
   \item[(iii)] There is a constant $M<\iy$ such that $p(y|\theta) < M$ for all $(y,\theta)$,
  \item[(iv)] all assumptions needed in the Bernstein-von Mises (BvM) Theorem \citep{walker1969asymptotic},
  \item[(v)] for all $\theta$, $\int_{\mathcal{Y}}(-\log p(y|\theta))p(y|\theta)dy < \iy$.
\end{itemize}

{\bf Remarks.} 
\begin{itemize}
\item There are alternatives or relaxations to (i) that also work. One is to assume that there is an $\alpha>0$ and $C$ with $\E_y[C(y)^2]<\iy$  such that $|p(y|\theta_1)-p(y|\theta_2)| \leq C(y)p(y|\theta_2)\n\theta_1-\theta_2\n^\alpha$. 
There are many examples when (i) holds, e.g.\ when $y$ is normal, Laplace distributed or Cauchy distributed with $\theta$ as a one-dimensional location parameter.

\item The assumption that $\Theta$ is bounded will be used solely to draw the conclusion that $\E_{y,\theta}\n \theta-\theta_0\n \ra 0$ as $n \ra \iy$, where $y$ is the sample and $\theta$ is either distributed according to the true posterior (which is consistent by BvM) or according to a consistent approximate posterior. The conclusion is valid by the definition of consistency and the fact that the boundedness of $\Theta$ makes $\n\theta-\theta_0\n$ a bounded function of $\theta$. If it can be shown by other means for special cases that $\E_{y,\theta}\n\theta-\theta_0\n \ra 0$ despite $\Theta$ being unbounded, then our results also hold.
    
\end{itemize}

\begin{proposition}
  For any approximation $\tilde{\pi}_i$ that converges in $L^1$ to $\pi_i$, we have that $\widehat{\overline{\mathrm{elpd}}}_\mathrm{loo,diff}$ converges in $L^1$ to $\overline{\mathrm{elpd}}_\mathrm{loo}$.
\end{proposition}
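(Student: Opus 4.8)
The plan is to bound $\E\bigl|\widehat{\overline{\mathrm{elpd}}}_\mathrm{loo,diff} - \overline{\mathrm{elpd}}_\mathrm{loo}\bigr|$ directly and show it vanishes as $n\ra\iy$. Writing $e_i = \pi_i - \tilde{\pi}_i$ for the approximation error and subtracting the two quantities, the fixed correction $\frac1n\sum_i\tilde\pi_i$ cancels against $\frac1n\sum_i\pi_i$ up to the errors, so the difference collapses to
\[
\widehat{\overline{\mathrm{elpd}}}_\mathrm{loo,diff} - \overline{\mathrm{elpd}}_\mathrm{loo} = \frac{1}{m}\sum_{j\in\mathcal{S}} e_j - \frac{1}{n}\sum_{i=1}^n e_i.
\]
This is the observation that drives everything: the estimator is exact up to the gap between the subsample average of the errors and their population average, so $L^1$ convergence is entirely a statement about how well $\frac1m\sum_{j\in\mathcal S}e_j$ tracks $\frac1n\sum_i e_i$.

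Next I would bound each of the two averages in $L^1$ separately with the triangle inequality. For the population term, $\E\bigl|\frac1n\sum_i e_i\bigr|\le \frac1n\sum_i\E|e_i|$. For the subsample term I would condition on the data: since the subsample is drawn uniformly with replacement, each index $J_k$ is uniform on $\{1,\dots,n\}$ and independent of the data, so $\E[\,|e_{J_k}|\mid\text{data}\,]=\frac1n\sum_{i=1}^n|e_i|$, whence $\E\bigl|\frac1m\sum_{j\in\mathcal S}e_j\bigr|\le\frac1m\sum_{k=1}^m\E|e_{J_k}|=\frac1n\sum_{i=1}^n\E|e_i|$. Both averages are therefore controlled by the same quantity $\frac1n\sum_i\E|\pi_i-\tilde\pi_i|$. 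Because the data $(y_1,\dots,y_n)$ are i.i.d.\ (hence exchangeable) and $\pi_i,\tilde\pi_i$ are built index-symmetrically from them, $\E|\pi_i-\tilde\pi_i|$ does not depend on $i$, so $\frac1n\sum_i\E|e_i|=\E|\pi_1-\tilde\pi_1|$, which $\ra 0$ by the hypothesis that $\tilde\pi$ converges in mean to $\pi$. Combining gives $\E\bigl|\widehat{\overline{\mathrm{elpd}}}_\mathrm{loo,diff}-\overline{\mathrm{elpd}}_\mathrm{loo}\bigr|\le 2\,\E|\pi_1-\tilde\pi_1|\ra 0$.

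The argument is short, and the only delicate step is the subsample term: one must pass to the conditional expectation given the data before invoking the uniformity of $J_k$, and then use exchangeability to replace the empirical error average with the per-observation $L^1$ error. The conceptually important feature I would emphasize is that the resulting bound $2\,\E|\pi_1-\tilde\pi_1|$ is completely independent of the subsample size $m$ (and, in the estimated-$\pi$ setting of Proposition \ref{pa3}, of $S$). This is precisely the favorable scaling property claimed: convergence is inherited entirely from the quality of the surrogate $\tilde\pi$, so even $m=1$ suffices in the limit. If one wanted the exact finite-sample behavior rather than only $L^1$ convergence, the same decomposition feeds into Eq.~\eqref{diff_est_var}, but for this statement the crude triangle-inequality bound is all that is needed.
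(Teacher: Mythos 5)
Your proof is correct and follows essentially the same route as the paper's: the identical error decomposition $\frac{1}{m}\sum_{j\in\mathcal{S}}e_j - \frac{1}{n}\sum_i e_i$, the triangle inequality, the uniformity of the subsample indices to reduce the subsample term to $\frac{1}{n}\sum_i \E|e_i|$, and the final bound $2\,\E|\pi_i-\tilde\pi_i|\ra 0$ independent of $m$ and $S$. If anything, you are more explicit than the paper on the two steps it glosses over --- conditioning on the data before invoking uniformity of the draws, and the exchangeability argument that makes $\E|\pi_i-\tilde\pi_i|$ index-free.
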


\begin{proof}
For convenience we will write $\hat{e}:=\widehat{\overline{\mathrm{elpd}}}_\mathrm{loo,diff}$, which for our purposes is more usefully expressed as
\[\hat{e}=\frac{1}{n}\left(\sum_{i=1}^{n}\log \tilde{\pi}_i + \frac{n}{m}\sum_{i=1}^{n}\sum_{j=1}^{m}I_{ij}(\pi_i - \tilde{\pi}_i)\right),\]
where $I_{ij}$ is the indicator that sample point $y_i$ is chosen in draw $j$ for the subsample used in $\hat{e}$.

We then get, with respect to all randomness involved (i.e.\ the randomness in generating $y$ and the randomness in choosing the subsample in $\hat{e}$)
\begin{align*}
\E|\hat{e} - \overline{\mathrm{elpd}}_\mathrm{loo}| 
&\leq \frac{1}{n}\E\left[\sum_{1}^{n} |\tilde{\pi}_i - \pi_i| +  \frac{n}{m}\sum_{i=1}^{n}\sum_{j=1}^{m}I_{ij}|\pi_i - \tilde{\pi}_i|\right] \\
&= \E|\log \tilde{\pi}_i - \pi_i| + \frac{1}{m}\sum_{i=1}^{n}\sum_{j=1}^{m}\frac{1}{n}\E|\pi_i-\tilde{\pi}_i| \\
&= 2\E|\tilde{\pi}_i-\pi_i| \\
&\ra 0.
\end{align*}
\end{proof}

\begin{proposition} \label{pa}
Let the subsampling size $m$ and the number of posterior draws $S$ be fixed at arbitrary integer numbers, let the sample size $n$ grow, assume that (i)-(vi) hold and let $q=q_n(\cdot|y)$ be any consistent approximate posterior. Write $\hat{\theta}_q = \arg\max\{q(\theta): \theta \in \Theta\}$ and assume further that $\hat{\theta}_q$ is a consistent estimator of $\theta_0$. Then
\[\tilde{\pi}_i \ra \pi_i\]
in $L^1$ for any of the following choices of $\pi_i$, $i=1,\ldots,n$.

\begin{itemize}
  \item[(a)] $\tilde{\pi}_i = \log p(y_i|y)$,
  \item[(b)] $\tilde{\pi}_i = \E_y[\log p(y_i|y)]$,
  \item[(c)] $\tilde{\pi}_i = \E_{\theta \sim q}[\log p(y_i|\theta)]$,
  \item[(d)] $\tilde{\pi}_i = \log p(y_i|\E_{\theta \sim q}[\theta])$,
  \item[(e)] $\tilde{\pi}_i = \log p(y_i|\hat{\theta}_q)$.
  \item[(f)] $\tilde{\pi}_i = \log p(y_i|y)+V_{\theta \sim p(\cdot|y)} (\log p(y_i|\theta))$.
  \item[(g)] $\tilde{\pi}_i = \log p(y_i|y) - \nabla \log p(y_i|\hat{\theta})^T \Sigma_\theta \nabla \log p(y_i|\hat{\theta})$ for any given fixed $\hat{\theta}$ and where the covariance matrix is with respect to $\theta \sim p(\cdot|y)$.
  \item[(h)] $\tilde{\pi}_i = \log p(y_i|y) - \nabla \log p(y_i|\hat{\theta})^T \Sigma_\theta \nabla \log p(y_i|\hat{\theta}) - \frac12 \rm{tr}(H_{\hat{\theta}} \Sigma_\theta H_{\hat{\theta}}) \Sigma_\theta)$ for any given fixed $\hat{\theta}$ and where the covariance matrix is as in (g)
  \item[(i)] $\tilde{\pi}_i = \log p(y_i|\hat{\theta}_q) - \nabla \log p(y_i|\hat{\theta})^T \Sigma_\theta \nabla \log p(y_i|\hat{\theta})$ for any given fixed $\hat{\theta}$ and where the covariance matrix is as in (g)
  \item[(j)] $\tilde{\pi}_i = \log p(y_i|y) - \nabla \log p(y_i|\hat{\theta})^T \Sigma_\theta \nabla \log p(y_i|\hat{\theta}) - \frac12 \rm{tr}(H_{\hat{\theta}} \Sigma_\theta H_{\hat{\theta}}) \Sigma_\theta)$ for any given fixed $\hat{\theta}$ and where the covariance matrix is as in (g)    
  \item[(k)] $\tilde{\pi}_i = \log \hat{p}(y_i|y_{-i})$ as defined in (\ref{ephat}) for any weight function $r$ such that $r(\theta)>0$ for all $\theta \in \Theta$.
\end{itemize}

\end{proposition}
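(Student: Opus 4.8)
The plan is to route every listed choice through a single common limit. I would show that both $\pi_i$ and each candidate $\tilde{\pi}_i$ converge in $L^1$ to the same target $\log p(y_i|\theta_0)$, and then finish by the triangle inequality
\[
\E|\tilde{\pi}_i-\pi_i|\le\E|\tilde{\pi}_i-\log p(y_i|\theta_0)|+\E|\pi_i-\log p(y_i|\theta_0)|\ra 0.
\]
This isolates the probabilistic content into one reusable estimate and reduces each case to checking that the relevant $\theta$-average of log-likelihoods concentrates at $\theta_0$.

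The workhorse is a pointwise bound extracted from assumption (i). Applying (i) with $(\theta_1,\theta_2)=(\theta,\theta_0)$ and with the roles reversed gives, writing $u=C(y_i)\|\theta-\theta_0\|$, the two-sided relative bound $\tfrac{1}{1+u}\le p(y_i|\theta)/p(y_i|\theta_0)\le 1+u$, whence
\[
\bigl|\log p(y_i|\theta)-\log p(y_i|\theta_0)\bigr|\le\log(1+u)\le C(y_i)\,\|\theta-\theta_0\|
\]
for every $\theta$. Combined with $\E_{y\sim p_t}[C(y)^2]<\iy$ and the consistency input $\E\|\theta-\theta_0\|\ra 0$ (which by the Bernstein--von Mises theorem and boundedness of $\Theta$ holds for $\theta$ drawn from the true posterior, the LOO posterior, or a consistent $q$), Cauchy--Schwarz yields $\E|\log p(y_i|\theta)-\log p(y_i|\theta_0)|\ra 0$: when the $\theta$-source is the LOO posterior, independence of $\theta$ and $y_i$ lets the expectation factor directly, and otherwise one uses $\E\|\theta-\theta_0\|^2\ra 0$, valid by bounded convergence on the bounded set $\Theta$. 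Taking $\theta\sim p(\cdot|y_{-i})$ inside $\pi_i=\int\log p(y_i|\theta)p(\theta|y_{-i})\,d\theta$ and pulling the absolute value in by Jensen already gives $\pi_i\ra\log p(y_i|\theta_0)$ in $L^1$.

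With this in hand the cases split into three families. The point-estimate and $\theta$-averaged choices (c)--(e) follow immediately from the concentration estimate applied to $q$ or to $\hat{\theta}_q$ (for the latter via Cauchy--Schwarz with $\E\|\hat{\theta}_q-\theta_0\|^2\ra 0$). The posterior-predictive choices (a),(b), which involve $\log\int$, I would handle by noting that $\int p(y_i|\theta)p(\theta|y)\,d\theta/p(y_i|\theta_0)$ is an average of factors each in $[\tfrac{1}{1+u},1+u]$; by Jensen this average lies in $[\tfrac{1}{1+\bar u_i},1+\bar u_i]$ with $\bar u_i=C(y_i)\E_{\theta\sim p(\cdot|y)}\|\theta-\theta_0\|$, so its log is bounded by $\bar u_i$ and the same Cauchy--Schwarz argument closes it. For the WAIC-type choice (f) the added variance term is bounded by $\E_\theta[(\log p(y_i|\theta)-\log p(y_i|\theta_0))^2]\le\E_\theta[C(y_i)^2\|\theta-\theta_0\|^2]\ra 0$. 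Finally, the importance-sampling estimator (k), divided by $p(y_i|\theta_0)$, is an $r$-weighted average of the same ratios, so for fixed $S$ its log is controlled by $\max_s C(y_i)\|\theta_s-\theta_0\|\le\sum_s C(y_i)\|\theta_s-\theta_0\|$, whose expectation vanishes.

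The main obstacle I expect is the Taylor-correction family (g)--(j). There each $\tilde{\pi}_i$ equals a case already handled, (a) or (e), plus a quadratic correction $\nabla\log p(y_i|\hat\theta)^{\!\top}\Sigma_\theta\,\nabla\log p(y_i|\hat\theta)$ and a trace-of-Hessian analogue, with $\Sigma_\theta$ the posterior covariance. These corrections vanish because posterior contraction forces $\Sigma_\theta\ra 0$, but converting this into $L^1$ convergence requires moment or uniform-integrability control on the fixed-$\hat\theta$ gradient and Hessian of the log-likelihood (e.g.\ $\E_{y_i}\|\nabla\log p(y_i|\hat\theta)\|^2<\iy$), together with the boundedness of $\Sigma_\theta$ on $\Theta$, so that a dominated-convergence argument applies. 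This regularity is not among (i)--(vi) and is where the argument is most delicate; the remaining cases are routine once the master estimate and the concentration lemma are established.
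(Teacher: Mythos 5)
Your route is the same as the paper's: its key lemma proves precisely that $\pi_i$ and each listed $\tilde{\pi}_i$ converge in $L^1$ to the common limit $\log p(y_i|\theta_0)$ --- using condition (i) to control the log-ratios, Cauchy--Schwarz, posterior consistency plus boundedness of $\Theta$, and a final triangle inequality --- and your treatment of (a)--(f) and (k) mirrors it essentially step for step (your two-sided bound $\tfrac{1}{1+u}\le p(y_i|\theta)/p(y_i|\theta_0)\le 1+u$ and the Jensen argument for the $\log\int$ cases is, if anything, a cleaner writing of the paper's positive/negative-part computation).

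The only substantive divergence is the Taylor family (g)--(j), and there your diagnosis is half right. For the gradient correction no assumption beyond (i)--(vi) is needed: letting $\theta_1\ra\theta_2$ in condition (i) yields $\|\nabla_\theta \log p(y|\hat{\theta})\|\le C(y)$ pointwise, so the correction term is dominated by $C(y_i)^2$ times $d\max_{j,k}|\Sigma_{jk}|\le d\,\mathrm{diam}(\Theta)^2$; since the paper shows $\max_{j,k}|\Cov(\theta(j),\theta(k))|\ra 0$ and $\E[C(y_i)^2]<\iy$, a dominated-convergence (Vitali) step closes (g) and (i) within the stated assumptions. For the Hessian trace term in (h) and (j), however, you are correct: nothing in (i)--(vi) controls the moments of $H_{\hat{\theta}}(y_i)$, and this is a gap in the paper's own proof as well --- it proves only that the posterior covariances vanish and then declares the cases ``established,'' silently ignoring that the vanishing covariance is multiplied by a $y_i$-dependent factor whose integrability is nowhere assumed. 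So the obstacle you flag is real, but it is shared by the paper rather than being a defect of your proposal relative to it; stating the missing moment condition explicitly, as you do, is the honest fix.
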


{\bf Note.} Part (k) holds in particular for the weight functions (\ref{weights}) and (\ref{trunc_weights}).

\medskip

{\em Remark.} By the variational BvM Theorems of \citet{wang2018frequentist}, $q$ can be taken to be either $q_{Lap}$, $q_{MF}$ or $q_{FR}$, i.e.\ the approximate posteriors of the Laplace, mean-field or full-rank variational families respectively in Proposition \ref{pa}, provided that one adopts the mild conditions in their paper.

The proof of Proposition \ref{pa} will be focused on proving (a) and then (b)-(e) will follow easily and (f)-(l) with only a few simple observations on the posterior variance of $\theta$. Note that parts (a)-(e) are contained in \citet{magnusson2019bayesian} and the proof of them is identical to that. Proposition \ref{pa} follows immediately from the following lemma.

\begin{lemma} \label{la}

With all quantities as defined above,  
\begin{equation} \label{ea}
\E_{y \sim p_t}|\pi_i - \log p(y_i|\theta_0)| \ra 0,
\end{equation}
with any of the definitions (a)-(e) of $\pi_i$ of Proposition \ref{pa}.
Furthermore,
\begin{equation} \label{eb}
\E_{y \sim p_t}|\log p(y_i|y_{-i}) - \log p(y_i|\theta_0)| \ra 0,
\end{equation}
as $n \ra \iy$.

\end{lemma}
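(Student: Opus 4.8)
The plan is to reduce every assertion in the lemma to a single Lipschitz-type control on the \emph{log}-likelihood and then to posterior concentration. First I would extract the crucial consequence of assumption (i). Dividing $|p(y|\theta_1)-p(y|\theta_2)|\le C(y)p(y|\theta_2)\n\theta_1-\theta_2\n$ by $p(y|\theta_2)$, and using the same inequality with the roles of $\theta_1,\theta_2$ interchanged, shows that both the ratio $p(y|\theta_1)/p(y|\theta_2)$ and its reciprocal are at most $1+C(y)\n\theta_1-\theta_2\n$, so the ratio is pinned away from $0$ and $\iy$. Consequently
\[
|\log p(y|\theta_1)-\log p(y|\theta_2)| \le \log\bigl(1+C(y)\n\theta_1-\theta_2\n\bigr)\le C(y)\n\theta_1-\theta_2\n
\]
for all $\theta_1,\theta_2$. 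This pointwise bound is the engine of the whole argument.

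Next I would prove \eqref{eb}. Writing $\pi_i=\log\int p(y_i|\theta)p(\theta|y_{-i})\,d\theta$, I would sandwich this between $\log p(y_i|\theta_0)\pm C(y_i)D_{-i}$ with $D_{-i}:=\E_{\theta\sim p(\cdot|y_{-i})}\n\theta-\theta_0\n$: the lower bound is Jensen applied to the pointwise bound, and the upper bound follows from $p(y_i|\theta)\le p(y_i|\theta_0)\bigl(1+C(y_i)\n\theta-\theta_0\n\bigr)$ integrated against the posterior. Hence $|\pi_i-\log p(y_i|\theta_0)|\le C(y_i)D_{-i}$. Since $y_i$ is independent of $y_{-i}$, the expectation factorizes: $\E_y|\pi_i-\log p(y_i|\theta_0)|\le \E[C(y_i)]\,\E[D_{-i}]$. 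Assumption (i) gives $\E[C]\le(\E[C^2])^{1/2}<\iy$, while Bernstein--von Mises on the size-$(n-1)$ sample $y_{-i}$ together with boundedness of $\Theta$ gives $\E[D_{-i}]=\E_{y_{-i},\theta}\n\theta-\theta_0\n\ra 0$, proving \eqref{eb}.

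For \eqref{ea} the same sandwich handles case (a): $\log p(y_i|y)=\log\int p(y_i|\theta)p(\theta|y)\,d\theta$ yields $|\log p(y_i|y)-\log p(y_i|\theta_0)|\le C(y_i)\,\E_{\theta\sim p(\cdot|y)}\n\theta-\theta_0\n$. The only new feature is that $y_i$ now appears in the conditioning data, so I cannot factorize and instead apply Cauchy--Schwarz, bounding by $(\E[C(y_i)^2])^{1/2}(\E_{y,\theta}\n\theta-\theta_0\n^2)^{1/2}\ra 0$, where boundedness of $\Theta$ upgrades the $L^1$ concentration to $L^2$ via $\n\theta-\theta_0\n^2\le \mathrm{diam}(\Theta)\,\n\theta-\theta_0\n$. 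Case (b) is $\E_{y_{-i}}$ of case (a), so it follows from (a) by Jensen; cases (d) and (e) are plug-ins, controlled directly by $|\log p(y_i|\hat\theta_q)-\log p(y_i|\theta_0)|\le C(y_i)\n\hat\theta_q-\theta_0\n$ and consistency of $\hat\theta_q$ (resp.\ of the posterior mean); and case (c) averages the pointwise bound over $q$, using consistency of $q$. In every case the right-hand side is $C(y_i)$ times a measure of how far $\theta$ lies from $\theta_0$ under the relevant distribution, and Cauchy--Schwarz with $\E[C^2]<\iy$ reduces the claim to that concentration. Finally, Proposition \ref{pa} itself follows by the triangle inequality $\E|\tilde\pi_i-\pi_i|\le \E|\tilde\pi_i-\log p(y_i|\theta_0)|+\E|\log p(y_i|\theta_0)-\pi_i|\ra 0$.

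The main obstacle is the careful treatment of the logarithm: assumption (i) controls the likelihood, not its log, so the two-sided step in the first paragraph, keeping the likelihood ratio bounded away from $0$, is essential, since otherwise the log predictive could be dragged toward $-\iy$ on events where the posterior places mass where $p(y_i|\theta)$ is small. A secondary difficulty is the loss of independence in case (a): because $p(\theta|y)$ depends on $y_i$, the clean factorization used for \eqref{eb} is unavailable, and one pays for this with the second moment $\E[C(y)^2]<\iy$ through Cauchy--Schwarz, which is precisely why assumption (i) requires a square-integrable $C$.
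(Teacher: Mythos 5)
Your proposal is correct and is essentially the paper's own argument: both rest on using assumption (i) in the two directions (swapping $\theta_1,\theta_2$) to control the likelihood ratio on both sides, Jensen's inequality to handle the integral inside the logarithm, the bound $\log(1+x)\le x$, Cauchy--Schwarz against $\E[C(y)^2]<\iy$, and posterior (or $q$-, or $\hat\theta_q$-) consistency combined with boundedness of $\Theta$ to send the concentration term to zero. The only cosmetic differences are that you consolidate (i) into a symmetric pointwise log-Lipschitz bound and a sandwich, where the paper bounds the positive parts $(\log\cdot)_{+}$ of the two ratio directions separately, and that you exploit the independence of $y_i$ and $y_{-i}$ to factorize the expectation in \eqref{eb} where the paper simply applies Cauchy--Schwarz there as well.
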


\begin{proof}

To avoid burdening the notation unnecessarily, we write throughout the proof $\E_y$ for $\E_{y \sim p_t}$. For now, we also write $\E_\theta$ as shorthand for $\E_{\theta \sim p(\cdot|y_{-i})}$. Recall that $x_{+} = \max(x,0) = ReLU(x)$.

Hence
\begin{align*}
\E_{y}\left[\left(\log\frac{p(y_i|y_{-i})}{p(y_i|\theta_0)}\right)_{\foo}\right]
&= \E_{y}\left[\left(\log\frac{\E_{\theta}[p(y_i|\theta)]}{p(y_i|\theta_0)}\right)_{\foo}\right] \\
&\leq \E_{y}\left[\log \left(1+\frac{\E_{\theta}\left[C(y_i)p(y_i|\theta_0)\n\theta-\theta_0\n\right]}{p(y_i|\theta_0)}\right)\right] \\
&\leq \E_{y,\theta}[C(y_i) \n\theta-\theta_0\n] \\
&\leq \left(\E_{y_i}[C(y_i)^2]\E_{y,\theta}\left[\n\theta-\theta_0\n^2\right]\right)^{1/2} \\
&\ra 0 \mbox{ as } n \ra \iy.
\end{align*}

Here the first inequality follows from condition (i) and the second inequality from the fact that $\log(1+x)<x$ for $x \geq 0$. The third inequality is Schwarz inequality. The limit conclusion follows from the consistency of the posterior $p(\cdot|y_{-i})$ and the definition of weak convergence, since $\n\theta-\theta_0\n^2$ is a continuous bounded function of $\theta$ (recall that $\Theta$ is bounded) and that the first factor is finite by condition (i).

For the reverse inequality,
\begin{align*}
\E_{y}\left[\left(\log\frac{p(y_i|\theta_0)}{p(y_i|y_{-i})}\right)_{\foo}\right] 
&= \E_{y}\left[\left(\log\E_{\theta}\left[\frac{p(y_i|\theta_0)]}{p(y_i|\theta)}\right]\right)_{\foo}\right] \\
&\leq \E_{y}\left[\log \left(1+\E_{\theta}\left[\frac{C(y_i)p(y_i|\theta)\n\theta-\theta_0\n}{p(y_i|\theta)}\right]\right)\right] \\
&\leq \left(\E_{y_i}[C(y_i)^2] \E_{y,\theta}\left[\n\theta-\theta_0\n^2\right]\right)^{1/2} \\
&\ra 0 \mbox{ as } n \ra \iy.
\end{align*}

This proves (\ref{eb}) and an identical argument (now letting $\E_\theta$ stand for $\E_{\theta \sim p(\cdot|y)}$) proves (\ref{ea}) for $\tilde{\pi}_i=p(y_i|y)$.

For $\tilde{\pi}_i=-\E_y[\log p(y_i|y)]$, note first that
\begin{align*}
\E_y\left|\E_y[\log p(y_i|y)]-\E_y[\log p(y_i|y_{-i})]\right| 
&= \left|\E_y[\log p(y_i|y) - \log p(y_i|y_{-i})]\right| \\
&\leq \E_y\left|\log p(y_i|y) - \log p(y_i|y_{-i})]\right|
\end{align*}

which goes to $0$ by (\ref{eb}) and (a). Hence we can replace $\tilde{\pi}_i=-\E[\log p(y_i|y)]$ with $\tilde{\pi}_i=-\E[\log p(y_i|y_{-i})]$ when proving (b). To that end, observe that
\begin{align*}
\left(\E_y[\log p(y_i|y_{-i})]-\log p(y_i|\theta_0)\right)_+ 
&= \left(\E_{y_i}\left[\E_{y_{-i}}\left[\log\frac{p(y_i|y_{-i})}{p(y_i|\theta_0)}\right]\right]\right)_{\foo} \\
&\leq \E_y \left[ \left( \log\frac{p(y_i|y_{-i})}{p(y_i|\theta_0)}\right)_{\foo} \right].
\end{align*}
where the inequality is Jensen's inequality used twice on the convex function $x \ra x_+$. Now everything is identical to the proof of (\ref{eb}) and the reverse inequality is analogous.

The other choices of $\tilde{\pi}_i$ follow along very similar lines. For $\tilde{\pi}_i=-\log p(y_i|\hat{\theta}_q)$, we have on mimicking the above that
\begin{align*}
\E_{y}\left[\left(\log\frac{p(y_i|\hat{\theta}_q)}{p(y_i|\theta_0)}\right)_{\foo}\right] 
&\leq \left(\E_{y_i}[C(y_i)^2]\E_{y}\left[\n\hat{\theta}_q-\theta_0\n^2\right]\right)^{1/2}
\end{align*}
and $\E_{y}[\n\hat{\theta}_q-\theta_0\n^2] \ra 0$ as $n \ra \iy$ by the assumed consistency of $\hat{\theta}_q$. The reverse inequality is analogous and (\ref{ea}) for $\pi_i=p(y_i|\hat{\theta}_q)$ is established.

For the case $\tilde{\pi}_i=-\log p(y_i|\E_{\theta \sim q}\theta)$, the analogous analysis gives
\[
\E_{y}\left[\left(\log\frac{p(y_i|\E_{\theta \sim q}\theta)}{p(y_i|\theta_0)}\right)_{\foo}\right] \\
\leq \E_{y_i}[C(y_i)^2]\E_y[\n\E_{\theta \sim q}\theta - \theta_0\n^2].
\]
Since $x \ra \n x-\theta_0\n^2$ is convex, the second factor on the right hand side is bounded by $\E_{y,\theta \sim q}[\n \theta-\theta_0\n^2]$ which goes to 0 by the consistency of $q$ and the boundedness of $\Theta$. The reverse inequality is again analogous.

For $\tilde{\pi}_i = -\E_{\theta \sim q}[\log p(y_i|\theta)]$,
\begin{align*}
\E_{y}\left[\left( \E_{\theta \sim q}[\log p(y_i|\theta)] - \log p(y_i|\theta_0) \right)_{\foo} \right] 
&= \E_y\left[ \left( \E_{\theta \sim q}\left[\log\frac{p(y_i|\theta)}{p(y_i|\theta_0)}\right] \right)_{\foo} \right] \\
&\leq \E_{y,\theta \sim q}\left[ \left(\log\frac{p(y_i|\theta)}{p(y_i|\theta_0)}\right)_{\foo} \right] \\
&\leq \left(\E_{y_i}[C(y_i)^2]\E_{y,\theta \sim q}[\n\theta-\theta_0\n^2]\right)^{1/2} \ra 0
\end{align*}
as $n \ra \iy$ by the consistency of $q$. Here the first inequality is Jensen's inequality applied to $x \ra x_+$ and the second inequality follows along the same lines as before.

To prove (f) it suffices by the triangle inequality to prove that $\E_y[V_{\theta \sim p(\cdot|y)}(\log p(y_i|\theta))] \ra 0$ as $n \ra \iy$. This follows from

\begin{align*}
\E_y\left[\E_{\theta \sim p(\cdot|y)} \left[(\log p(y_i|\theta)-\log p(y_i|\theta_0))_+^2\right]\right] 
&\leq \E_{y,\theta} \left[\log \left( 1 + \frac{C(y_i)p(y_i|\theta)\n\theta-\theta_0\n}{p(y_i|\theta_0)}\right)^2\right] \\
&\leq \E_{y,\theta}[2C(y_i) \n \theta-\theta_0 \n] \\
&\leq 2\E_{y,\theta}[C(y_i)^2]^{1/2}\E_{y,\theta}[\n\theta-\theta_0\n^2]^{1/2} \ra 0.
\end{align*}

To prove that $\E_y[\E_{\theta \sim p(\cdot|y)} \left[(\log p(y_i|\theta_0)-\log p(y_i|\theta))_+^2\right] \ra 0$ is analogous.

\medskip

For (g) and (h) it suffices to observe that $\max_{i,j}|\Cov(\theta(i),\theta(j))| \ra 0$. However
\begin{align*}
|\max_{i,j}\Cov(\theta(i),\theta(j))| &= \max_iV(\theta(j)) \\
&\leq \max_i\E[|\theta(i)-\theta_0(i)|^2] \\
&\ra 0
\end{align*}
where the final conclusion follows from the consistency of $\theta \sim p(\cdot|y)$ and the boundedness of $\Theta$. Hence (g) and (h) are established. Similarly (g2) and (h2) follows from (g), (h) and (e).

For (k), write $r'(\theta_s) = r(\theta_s)/\sum_{j=1}^{S}r(\theta_j)$ for the random weights given to the individual $\theta_s$:s in the expression for $\hat{p}(y_i|y_{-i})$. Then we have, with $\theta=(\theta_1,\ldots,\theta_S)$ chosen according to $q$,
\begin{align*}
\E_y \left[ \left(\log \frac{\hat{p}(y_i|y_{-i})}{p(y_i|\theta_0)}\right)_{\foo}\right] 
&= \E_{y,\theta}\left[\left(\log\frac{\sum_{s=1}^{S}r'(\theta_s)p(y_i|\theta_s)}{p(y_i|\theta_0)}\right)_{\foo}\right] \\
&\leq \E_{y,\theta}\left[\log\left(1+\frac{\sum_{s=1}^{S}r'(\theta_s)|p(y_i|\theta_s)-p(y_i|\theta_0)|}{p(y_i|\theta_0)}\right)\right] \\
&\leq \E_{y,\theta}\left[\log\left(1+C(y_i)\sum_{s=1}^{S}r'(\theta_s)\n\theta_s-\theta_0\n\right)\right] \\
&\leq \E_{y,\theta}\left[\log\left(1+C(y_i)\sum_{s=1}^{S}\n\theta_s-\theta_0\n\right)\right] \\
&\leq \E_{y,\theta}\left[C(y_i)\sum_{s=1}^{S} \n\theta_s-\theta_0\n\right] \\
&\leq \left(\E_{y_i}[C(y_i)^2]\E_{y,\theta}\left[\left(\sum_{s=1}^{S} \n\theta_s-\theta_0\n\right)^2\right]\right)^{1/2},
\end{align*}

where the second inequality is condition (i) and the limit conclusion follows from the consistency of $q$.
For the reverse inequality to go through analogously, observe that
\begin{align*}
\frac{\left| p(y_i|\theta_0)-\sum_s r'(\theta_s)p(y_i|\theta_s) \right|}{\sum_s r'(\theta_s)p(y_i|\theta_s)} 
&\leq \frac{\sum_s r'(\theta_s)|p(y_i|\theta_s)-p(y_i|\theta_0)|}{\sum_s r'(\theta_s)p(y_i|\theta_s)} \\
&\leq \frac{\sum_s r'(\theta_s)p(y_i|\theta_s)\n \theta_s-\theta_0 \n}{\sum_s r'(\theta_s)p(y_i|\theta_s)} \\
&\leq \max_s\n\theta_s-\theta_0\n \\
&\leq \sum_s \n \theta_s-\theta_0 \n.
\end{align*}
Equipped with this observation, mimic the above.


\end{proof}

\section*{Reproducing results}

\subsection*{The arsenic data}

For the spline model comparison we use the \texttt{rstanarm} R package \citep{rstanarm} with the following R script.

\begin{lstlisting}[language=R,basicstyle=\footnotesize\ttfamily]
#' **Load data**
url <- 
  "http://stat.columbia.edu/~gelman/arm/examples/arsenic/wells.dat"
wells <- read.table(url)
wells$dist100 <- with(wells, dist / 100)
wells$y <- wells$switch

#' **Centering the input variables**
wells$c_dist100 <- wells$dist100 - mean(wells$dist100)
wells$c_arsenic <- wells$arsenic - mean(wells$arsenic)
wells$c_educ4 <- wells$educ/4 - mean(wells$educ/4)

#* **Latent linear model no interactions**
fit_1 <- stan_glm(y ~ c_dist100 + c_arsenic + c_educ4,
                  family = binomial(link="logit"), 
                  data = wells, 
                  iter = 1500, 
                  warmup = 1000, 
                  chains = 4)

#* **Latent linear model**
fit_2 <- stan_glm(y ~ c_dist100 + c_arsenic + c_educ4 +
                      c_dist100:c_educ4 + c_arsenic:c_educ4,
                  family = binomial(link="logit"), 
                  data = wells, 
                  iter = 1500, 
                  warmup = 1000, 
                  chains = 4)

#* **Latent GAM**
fit_3 <- stan_gamm4(y ~ s(dist100) + s(arsenic) + s(dist100, c_educ4),
                    family = binomial(link="logit"), 
                    data = wells, 
                    iter = 1500, 
                    warmup = 1000, 
                    chains = 4)
\end{lstlisting}           



\subsection*{Generating data and fitting regularized horse-shoe and normal model}
\begin{lstlisting}[language=R,basicstyle=\footnotesize\ttfamily]
library(arm)
library(rstanarm)

n <- 1e6

set.seed(1656)
x <- rnorm(n)
xn <- matrix(rnorm(n*99),nrow=n)
a <- 2
b <- 3
sigma <- 10
y <- a + b*x + sigma*rnorm(n)
fake <- data.frame(x, xn, y)

fit1 <- stan_glm(y ~ ., data=fake, 
                 mean_PPD=FALSE, 
                 refresh=0, 
                 seed=SEED, 
                 chains = 4, 
                 warmup = 1000, 
                 iter = 1500)
                 
fit2 <- stan_glm(y ~ ., prior=hs(), data=fake,
                 mean_PPD=FALSE, 
                 refresh=0, 
                 seed=SEED, 
                 chains = 4, 
                 warmup = 1000, 
                 iter = 1500)              
\end{lstlisting}

\section*{Models}

\subsection*{Stan Models}

\subsubsection*{Bayesian linear regression (BLR)}
\lstinputlisting[language=Stan,basicstyle=\footnotesize\ttfamily]{models/blr/blr.stan}

\subsubsection*{Pooled model (1)}
\lstinputlisting[language=Stan,basicstyle=\footnotesize\ttfamily]{models/radon_pool/radon_pool.stan}

\subsubsection*{Partially pooled model (2)}
\lstinputlisting[language=Stan,basicstyle=\footnotesize\ttfamily]{models/radon_partial_pool/radon_partial_pool_noncentered.stan}

\subsubsection*{No pooled model (3)}
\lstinputlisting[language=Stan,basicstyle=\footnotesize\ttfamily]{models/radon_no_pool/radon_no_pool.stan}

\subsubsection*{Variable intercept model (4)}
\lstinputlisting[language=Stan,basicstyle=\footnotesize\ttfamily]{models/radon_variable_intercept/radon_variable_intercept_noncentered.stan}

\subsubsection*{Variable slope model (5)}
\lstinputlisting[language=Stan,basicstyle=\footnotesize\ttfamily]{models/radon_variable_slope/radon_variable_slope_noncentered.stan}

\subsubsection*{Variable intercept and slope model (6)}
\lstinputlisting[language=Stan,basicstyle=\footnotesize\ttfamily]{models/radon_variable_intercept_slope/radon_variable_intercept_slope_noncentered.stan}

\newpage


\end{document}